\newtheorem{lemma}{Lemma}
\newtheorem{theorem}{Theorem}
\newtheorem{corollary}{Corollary}
\newtheorem{example}{Example}
\newtheoremstyle{named}{}{}{\itshape}{}{\bfseries}{.}{.5em}{#1\thmnote{ #3}}
\theoremstyle{named}
\newtheorem*{namedassumption}{Assumption}
\newcommand{\K}{\mathcal{K}}
\newcommand{\R}{\mathbb{R}}
\newcommand{\Z}{\mathcal{Z}}
\numberwithin{equation}{section}
\begin{document}

\title{Almost Sure Uniqueness of a Global Minimum\\ 
Without Convexity\footnote{A previous version of this paper was circulated under the title ``Generic Uniqueness of a Global Minimum.'' The author gratefully acknowledges help from conversations with Donald Andrews, Xiaohong Chen, Yuichi Kitamura, Ivana Komunjer, Simon Lee, Adam McCloskey, Jos\'{e} Luis Montiel Olea, Serena Ng, Aureo de Paula, Bernard Salani\'{e}, Tobias Salz, and Ming Yuan.}}
\author{Gregory Cox\footnote{Department of Economics, Columbia University, gfc2106@gmail.com}\\Columbia University\\
February 19, 2019}
\date{}
\maketitle
\vspace{-0.5cm}

\begin{abstract}
This paper establishes the argmin of a random objective function to be unique almost surely. 
This paper first formulates a general result that proves almost sure uniqueness without convexity of the objective function. 
The general result is then applied to a variety of applications in statistics. 
Four applications are discussed, including uniqueness of M-estimators, both classical likelihood and penalized likelihood estimators, and two applications of the argmin theorem, threshold regression and weak identification. 

\mbox{}\\
\mbox{}\\
\mbox{}\\
\mbox{}\\
\mbox{}\\
\mbox{}\\
\mbox{}\\
\mbox{}\\

\textbf{Keywords:} Global Optimization, Nonconvex Optimization, M-estimation, Mixture Model, Argmax Theorem, Threshold Regression, Weak Identification 
\end{abstract}

\pagebreak

\section{Introduction}

This paper establishes the argmin of a random objective function to be unique almost surely. This means that the probability the argmin contains two or more points is zero. The task of finding the argmin of a random function is a very general problem, and evaluating whether the argmin is unique is important in many applications. For example, in M-estimation, uniqueness is equivalent to the estimator being well-defined as a point. In addition, uniqueness is a condition for the argmin theorem, which characterizes the asymptotic distribution of many estimators. 

The usual argument for uniqueness of the argmin relies on convexity. Without convexity, it is difficult to guarantee uniqueness of the argmin. At the same time, there is a popular intuition that multiple global minimizers occurring with positive probability requires a degenerate random function, in some sense. By considering almost sure uniqueness and relying on a type of nondegeneracy condition, this paper provides a systematic way to relax convexity. 

This paper first formulates a general result, Lemma 1, that proves almost sure uniqueness under very weak conditions. This result relies on a type of nondegeneracy condition, called genericity, which states that certain derivatives of the objective function are nonzero. The key to this condition is that it permits derivatives with respect to $z$, a random variable indexing the randomness of the objective function, in addition to derivatives with respect to the domain of optimization. This key aspect makes Lemma 1 useful in a variety of statistical applications. 

At this level of generality, there are not many papers that seek to verify uniqueness of the argmin of a function without convexity. The closest is an approach based on a ``Mountain Pass Lemma.''  This applies if the Hessian of the objective function is positive definite whenever the gradient is zero. The intuition is that between any two minimizers there must exist a local maximum or saddle point. While this condition is sufficient in one dimension, \cite{TaroneGruenhage1975} gives a counterexample in multiple dimensions. A variety of papers, including \cite{MakelainenSchmidtStyan1981}, \cite{Demidenko2008}, and \cite{Mascarenhas2010} supplement the Hessian condition with additional regularity conditions to prove uniqueness of the minimizer. 

This approach has two disadvantages. First, it has narrow scope. The conclusion of the Mountain Pass Lemma is that the local minimizer is unique. Thus, this approach does not work for any function with multiple local minimizers but a unique global minimizer. Second, the Hessian condition can be difficult to verify if the derivatives of the objective function are intractable. In contrast, Lemma 1 applies to functions with multiple local minimizers, and the assumptions of Lemma 1 are easy to verify. 

Lemma 1 has a variety of applications in statistics and, more broadly, optimization. In this paper, we discuss four applications, two examples of M-estimation and two applications of the argmin theorem. See Section 3 for the literature related to each application. 

M-estimators minimize a random objective function. 
The literature proves uniqueness of an M-estimator for a nonconvex objective function only in isolated cases. 
Lemma 1 can be used to prove uniqueness more generally, which we demonstrate in two cases not established in the literature. 
We prove uniqueness for the classical maximum likelihood estimator for a finite mixture model, and we prove uniqueness of the penalized maximum likelihood estimator of a linear regression with a nonconvex penalty. 

The argmin theorem characterizes the asymptotic distribution of an estimator using the argmin of a limiting stochastic process, if it is almost surely unique. 
Lemma 1 can be used to verify the uniqueness condition, which we demonstrate in two cases not established in the literature. 
\cite{MallikBanerjeeSen2013} consider a p-value based estimator of a threshold regression and use the argmin theorem, but are unable to verify the uniqueness condition. 
We verify it using Lemma 1. 
Also, characterizing the asymptotic distribution for estimators of parameters that are weakly identified uses the argmin theorem. 
In this case, we give low-level sufficient conditions for the uniqueness condition to hold, as well as two counterexamples where it does not. 

Section 2 states Lemma 1. Section 3 discusses the applications. Section 4 proves Lemma 1. Section 5 concludes. An appendix contains additional proofs.

\section{A General Uniqueness Lemma}

This paper studies minimizers of an objective function, $t\mapsto Q(t,z)$, where $z$ is random. The following assumption eliminates mass points in the distribution of $z$. 

\begin{namedassumption}[Absolute Continuity]
Let $z$ be an absolutely continuous random $d_z$-vector with distribution $P$. Let $\Z\subset\R^{d_z}$ be a measurable set such that $P(z\in\Z)=1$. 
\end{namedassumption}
Remark: 
\begin{enumerate}
\item The finite dimensionality of $z$ is not restrictive. Infinite dimensional sources of randomness can be accommodated by focusing on a finite dimensional marginal distribution, and conditioning on the remainder. 
Section 3.2.1 demonstrates this in an application in which the randomness is a Gaussian stochastic process. 
\end{enumerate}

The following assumption specifies the domain over which $Q(t,z)$ is minimized. 

\begin{namedassumption}[Manifold]\mbox{} 
Let $T=\cup_{j\in J} T_j$ be a disjoint union of finitely or countably many second-countable Hausdorff manifolds, possibly with boundary or corner.
\end{namedassumption}
Remark: 
\begin{enumerate}
\item Using manifolds, possibly with boundaries or corners, is a flexible way to allow for a variety of shapes to be minimized over. $T_j$ is a manifold with boundary or corner if each point, $t\in T_j$, is locally diffeomorphic to a neighborhood in $\R^{d_{T_j}}_+$, where $\R_+$ denotes the nonnegative real numbers, $d_{T_j}$ denotes the dimension of $T_j$, and where a diffeomorphism is a continuously differentiable function with a continuously differentiable inverse. This definition is slightly more general than common definitions in differential topology because it explicitly accounts for corners and higher-dimensional corners (See how the boundary is handled in \cite{GuilleminPollack1974} 
and especially \cite{Wall2016}, which allows for corners, but not higher-dimensional corners). This generalization is important for many applications in statistics that require irregularly shaped $T$. In the M-estimation application, $T$ is the parameter space. In the weak identification application, $T$ is the identified set, which may have an irregular shape due to bounds. 
\end{enumerate}

For each $t\in T_j$, let $A(t)$ denote a subset of the tangent cone of $t$ to $T_j$. (Formally the tangent cone of $t$ to $T_j$ is defined as the set of derivatives of curves in $T_j$ that start at $t$. Informally the tangent cone indexes directions for which derivatives with respect to $t$ can be defined.) This is a flexible way to accommodate derivatives at boundary points of $t\in T_j$, as well as some nondifferentiability of the objective function, which is permitted by the next assumption. 
\begin{namedassumption}[Continuous Differentiability]\mbox{}
\begin{enumerate}[label=(\alph*)]
\item For each $j\in J$, $Q(t,z)$ is a continuous function of $t\in T_j$ and $z\in\Z$. 
\item For every $z\in\Z$ and $t\in T_j$, $Q(t,z)$ is differentiable with respect to $z$, and the derivative is continuous with respect to $t$ and $z$. 
\item For every $z\in\Z$, $t\in T_j$, and $\Delta\in A(t)$, $Q(t,z)$ is differentiable with respect to $t$ in the direction $\Delta$, and the derivative is continuous with respect to $t$ and $z$. 
\end{enumerate}
\end{namedassumption}
Remark: 
\begin{enumerate}
\item Permitting $A(t)$ to be a strict subset of the tangent cone allows for nondifferentiability of the objective function in directions excluded from $A(t)$. 
\end{enumerate}

Let $\Xi=\{(t,s,z): z\in\Z, t, s\in T, \text{ and } t\neq s\}$. Let $\xi(t,s,z)=Q(t,z)-Q(s,z)$ be defined on $\Xi$. The next assumption is a type of nondegeneracy condition that rules out $t\neq s$ both being global minimizers of $Q(t,z)$ simultaneously. 

\begin{namedassumption}[Generic]
Assume $\xi(t,s,z)$ is a generic function over $\Xi$. That is, for every $(t,s,z)\in \Xi$, at least one of the following is true: 
\begin{enumerate}[label=(\alph*)]
\item $\xi(t,s,z)\neq 0$, 
\item there exists $\Delta\in A(t)$ such that $\frac{d}{d h}\xi(t+h\Delta,s,z)|_{h=0}< 0$, 
\item there exists $\Delta\in A(s)$ such that $\frac{d}{d h}\xi(t,s+h\Delta,z)|_{h=0}> 0$, or 
\item $\frac{d}{dz}\xi(t,s,z)\neq 0$. 
\end{enumerate}
\end{namedassumption}

Remarks: 
\begin{enumerate}
\item The key to Assumption Generic is condition (d). Often, derivatives with respect to $z$ are more tractable than derivatives with respect to $t$. In the applications, the general strategy for verifying Assumption Generic is to show that $\frac{d}{dz}Q(t,z)=\frac{d}{dz}Q(s,z)$ implies $t=s$. 
\item For interior points, $t$ or $s$, conditions (b) and (c) are related to first order conditions for optimality. If the derivative with respect to $t$ is nonzero, then $t$ not a global minimizer, and condition (b) is satisfied for some $\Delta$. These conditions can be augmented with conditions on second derivatives to allow saddle points and local maximizers. 
\item Assumption Generic makes precise the type of degeneracy needed for a random function to have multiple global minimizers with positive probability. Specifically, for interior $t$ and $s$, Assumption Generic is a system of $n+1$ nonlinear equations in $n$ unknowns. Intuitively, if such a system of equations permits a common solution, then it would seem to be degenerate, in some sense. 
\end{enumerate}

\begin{lemma}
Under Assumptions Absolute Continuity, Manifold, Continuous Differentiability, and Generic, the argmin of $Q(t,z)$ over $t\in T$ is unique almost surely-$z$. 
\end{lemma}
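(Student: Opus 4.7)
My plan is to prove the bad set
\[
B = \{z\in \Z : \arg\min_{t\in T} Q(t, z)\ \text{contains two or more points}\}
\]
has Lebesgue measure zero; Assumption Absolute Continuity then gives $P(B)=0$. Because each $T_j$ is second countable and hence $\sigma$-compact, I can write $T=\bigcup_m K_m$ as a countable union of compact sets with each $K_m$ contained in some single $T_j$. Then $B\subseteq\bigcup_{m,m'} B_{m,m'}$ where
\[
B_{m,m'} = \{z\in \Z : \exists\, t\in K_m,\ s\in K_{m'},\ t\neq s,\ t,s\in \arg\min_{t'\in T} Q(t',z)\},
\]
so it suffices to show each $B_{m,m'}$ has Lebesgue measure zero.

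The main tools are a restricted value function and an envelope identity. For each $m$, set $\phi_m(z):=\min_{t\in K_m} Q(t,z)$, which is well-defined by compactness of $K_m$ and continuity of $Q(\cdot,z)$. Assumption Continuous Differentiability (b) makes $\nabla_z Q$ continuous in $(t,z)$ and hence bounded on $K_m\times N$ for every compact $N\subseteq\Z$; so $Q(t,\cdot)$ is locally Lipschitz in $z$ uniformly in $t\in K_m$, and $\phi_m$ inherits the Lipschitz constant. By Rademacher, $\phi_m$ is differentiable almost everywhere. A standard envelope argument yields, at any differentiability point $z$, $\nabla\phi_m(z)=\nabla_z Q(t^*,z)$ for every $t^*\in\arg\min_{K_m} Q(\cdot,z)$, since $Q(t^*,\cdot)-\phi_m$ is nonnegative and vanishes at $z$.

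Now fix $(m,m')$ and a point $z\in B_{m,m'}$ at which both $\phi_m$ and $\phi_{m'}$ are differentiable (the complement forms a null set by Rademacher). Pick global minimizers $t\in K_m,\ s\in K_{m'}$ witnessing $z\in B_{m,m'}$. Then $\phi_m(z)=\phi_{m'}(z)=\inf_T Q(\cdot,z)=Q(t,z)=Q(s,z)$, so $t$ is a $K_m$-minimizer and $s$ a $K_{m'}$-minimizer. Apply Assumption Generic at $(t,s,z)\in\Xi$: (a) fails since $\xi(t,s,z)=0$; (b) and (c) fail because at any global (hence local) minimizer, the directional derivative along every $\Delta\in A(t)$ or $A(s)$ is nonnegative, precluding the strict signs in (b) and (c). Therefore (d) must hold, giving $\nabla_z Q(t,z)\neq\nabla_z Q(s,z)$. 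Combined with the envelope identity this yields $\nabla\phi_m(z)\neq\nabla\phi_{m'}(z)$. If $m=m'$, this is the immediate contradiction $\nabla\phi_m(z)\neq\nabla\phi_m(z)$, so the differentiable part of $B_{m,m}$ is empty and $B_{m,m}$ is null. If $m\neq m'$, the locally Lipschitz $\Delta:=\phi_m-\phi_{m'}$ satisfies $\Delta(z)=0$ with $\nabla\Delta(z)\neq 0$, and a slicing argument shows that such $z$ form a null set. Summing over the countable family of pairs gives $|B|=0$.

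The main obstacle I expect is the slicing step: $\Delta$ is only pointwise differentiable almost everywhere, not $C^1$, so the implicit function theorem is unavailable. However, at any differentiability point $z$ with $\partial_{z_k}\Delta(z)\neq 0$, pointwise differentiability gives $\Delta(z+he_k)=h\,\partial_{z_k}\Delta(z)+o(h)\neq 0$ for small $h\neq 0$, so on each $z_k$-slice the set $\{z_k : \Delta(z_{-k},z_k)=0\ \text{and}\ \partial_{z_k}\Delta(z_{-k},z_k)\neq 0\}$ is at most countable, and Fubini in the remaining coordinates gives measure zero overall. A secondary subtlety --- that $t$ or $s$ may lie on the boundary or a corner of $T_j$ --- is handled transparently by the directional-derivative formulation of Generic's conditions (b) and (c), which applies equally to interior and boundary minimizers.
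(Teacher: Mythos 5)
Your proof is correct, but it takes a genuinely different route from the paper. The paper never differentiates the value function: it reduces the problem to pairs of disjoint compact neighborhoods drawn from a countable separating family (its Lemmas 2--3), and then, for each triple $(t,s,\bar z)$, uses whichever clause of Assumption Generic holds to produce neighborhoods on which simultaneous minimization has probability zero --- clause (d) being handled by a secant/mean-value comparison showing the two local value functions can cross at most once along a line in the $z$-direction, followed by Tonelli (its Lemma 6). You instead exhaust $T$ by countably many compacta $K_m$ (each $T_j$ is $\sigma$-compact), note the restricted value functions $\phi_m$ are locally Lipschitz so Rademacher applies, use the envelope/Danskin identity $\nabla\phi_m(z)=\nabla_z Q(t^*,z)$ at differentiability points, observe that at an actual pair of distinct global minimizers clauses (a)--(c) of Generic fail automatically so (d) must hold, and conclude that the bad set lies inside the null set where the Lipschitz function $\phi_m-\phi_{m'}$ vanishes with nonzero gradient (your slicing step is a hands-on proof of the standard fact that a Lipschitz function's gradient vanishes a.e.\ on each level set); the $m=m'$ case is killed outright by the envelope identity, so you never need disjoint separating neighborhoods. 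What each approach buys: yours is shorter and isolates the measure-theoretic content in one standard level-set fact, and it only ever invokes Generic at genuine minimizers, which is why clauses (b)--(c) play no active role; the paper's covering argument avoids Rademacher and a.e.\ differentiability entirely, working with uniform local bounds, which is why it must handle arbitrary, possibly non-minimizing $(t,s)$ and hence uses all four clauses. If you write this up, state explicitly the measurability of the sets fed to Fubini (differentiability points and partial derivatives of a Lipschitz function are Borel; subsets of null sets are harmless by completeness), and note that, exactly as in the paper's own Lemma 6, you implicitly use the $z$-smoothness of $Q$ on full neighborhoods of points of $\Z$, not merely on $\Z$ itself.
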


Remarks: 
\begin{enumerate}
\item In the special case that $T=\R^m$, the conditions of Lemma 1 can be satisfied without reference to manifolds or tangent cones. Assumption Continuous Differentiability is satisfied if $Q(t,z)$ is continuously differentiable with respect to $(t,z)$. Conditions (b) and (c) in Assumption Generic simplify to 
\begin{enumerate}
\item[(b')] $\frac{d}{dt}\xi(t,s,z)\neq 0$, or 
\item[(c')] $\frac{d}{ds}\xi(t,s,z)\neq 0$. 
\end{enumerate}
Thus, Assumption Generic is satisfied if, for every $(t,s,z)\in\Xi$, $\xi(t,s,z)$ is nonzero, or any of its derivatives are nonzero. 
\item Intuitively, the proof of Lemma 1 eliminates potential global minimizers occurring at distinct points. Each condition in Assumption Generic eliminates $t\neq s$ as simultaneous global minimizers of $Q(t,z)$. Clearly, using the first-order condition, if the derivative with respect to $t$ is nonzero, then we can eliminate that point as a global minimizer. The novel idea is recognizing that we can do the same thing for derivatives with respect to $z$. If the difference of the derivatives at two distinct points, $t\neq s$, with respect to $z$ is nonzero, then the probability of two global minimizers occurring in neighborhoods of those two points is zero. If, with probability 1, all simultaneous global minimizers $t\neq s$ are eliminated, we can conclude that the argmin contains two or more points with probability zero. 
\end{enumerate}

\section{Applications}

\subsection{Nonconvex M-estimation}

Lemma 1 can be applied to estimation methods that minimize a random objective function, also known as M-estimation. In this case, $Q$ is the negative of the likelihood or some other objective function, $T$ is the parameter space, and $z$ is the sample. These optimization problems are known to be nonconvex in general. 

Uniqueness of the argmin is an important property in M-estimation, for a variety of reasons. (1) Although it is not necessary for asymptotic results, such as consistency, uniqueness is the finite sample property that the estimator is a point, a desirable property in itself. (2) In addition, finding the global minimum is a very hard problem numerically, and there are many algorithms, such as multi-start or branch-and-bound, that are designed to find the global minimum. For all of these, uniqueness of the argmin is important for a well-defined convergence criterion. At the same time, uniqueness is a property that is often difficult to verify numerically because the objective function can be very flat or contain many local minimizers in a neighborhood of the global minimizer. (3) Also, uniqueness is important for replication and communication in research. If a replication study calculates a different value of an M-estimator, the study may come to a different conclusion than the original. (4) In addition, \cite{HillierArmstrong1999} provide a formula for the exact density of the maximum likelihood estimator, under the assumption that it is unique, among other regularity conditions. For these reasons, it is useful to have an analytic guarantee that the argmin is unique almost surely. 

\subsubsection{Classical Likelihood}

The canonical example is classical maximum likelihood. A lot of effort has been put into verifying uniqueness of the argmin in isolated cases of nonconvex likelihoods. These examples include the  truncated normal likelihood (\cite{Orme1989} and \cite{OrmeRuud2002}), 
the Cauchy likelihood (\cite{Copas1975}), 
the Weibull likelihood (\cite{ChengChen1988}), 
the Tobit model (\cite{Olsen1978} and \cite{WangBice1997}), 
random coefficient regression models (\cite{Mallet1986}), 
k-monotone densities (\cite{Seregin2010}), 
estimating a covariance matrix with a Kronecker product structure (\cite{RosBijmaMunckGunst2016} and \cite{SoloveychikTrushin2016}), 
and a variety of nonparametric mixture models (\cite{Simar1976}, \cite{HillSaundersLaud1980}, \cite{Lindsay1983a, Lindsay1983b}, \cite{Jewell1982}, \cite{LindsayCloggGrego1991}, \cite{LindsayRoeder1993}, and \cite{Wood1999}). 
All of these examples  require specific knowledge about the structure of the objective function. 

The mixture model is an important example because of its widespread use and the presence of many local minimizers. The cases where uniqueness has been verified, such as in \cite{Lindsay1983a}, are for nonparametric mixture models, where the number of mixture components, $J$, is allowed to be as large as necessary to maximize the likelihood. In some cases, this can be as large as $n/2$, or half of the sample size. This contrasts with finite mixture models, where the number of mixture components is fixed and assumed known. To the author's knowledge, uniqueness of the maximum likelihood estimator has not been verified in finite mixture models. Theorem 1, below, uses Lemma 1 to verify uniqueness of the maximum likelihood estimator for a finite mixture of normal distributions. 

The normal mixture model assumes the sample, $\{z_i\}_{i=1}^n$, is drawn independently and identically from a continuous distribution with density, $f_0(z)$, that is approximated by a mixture density, $f(z;\tau,\mu)$, where $\tau$ is a $J$-vector of weights and $\mu$ is a $J$-vector of means. The mixture density satisfies 
\[
f(z; \tau, \mu)=\sum_{j=1}^J \tau_j \phi(z-\mu_j),
\]
where $\phi(z)$ is the standard normal density. The weights, $\tau_j$, are assumed to be positive and sum to 1. The means, $\mu_j$, are assumed to be strictly increasing: $\mu_1<\mu_2<...<\mu_J$. These assumptions are necessary to ensure that the components can be separately identified. In this case, the parameter to be optimized is $(\tau,\mu)$, while the random vector is the full sample, $z=(z_1,...,z_n)$. We can write the negative of the log-likelihood as: 
\[
Q(\tau,\mu,z)=-\sum_{i=1}^n \log f(z_i;\tau,\mu). 
\]
\begin{theorem} 
If $J\le \sqrt{n}$, then the argmin of $Q(\tau,\mu,z)$ over $(\tau,\mu)$ is unique almost surely-$z$. 
\end{theorem}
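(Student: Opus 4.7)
The plan is to verify the four hypotheses of Lemma~1 and invoke it. The sample $z=(z_1,\ldots,z_n)$ is absolutely continuous in $\R^n$, so Assumption Absolute Continuity holds. The parameter space $T=\{(\tau,\mu)\in\R^{2J}:\tau_j>0,\ \sum_j\tau_j=1,\ \mu_1<\cdots<\mu_J\}$ is an open $(2J-1)$-dimensional smooth manifold, and because $f(z;\tau,\mu)>0$ everywhere, $Q$ is jointly $C^\infty$ in $(t,z)$; Assumptions Manifold and Continuous Differentiability are therefore immediate.

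All the work lies in Assumption Generic, which I would establish using condition~(d) alone. Restrict to the full-probability set $\Z\subset\R^n$ on which $z_1,\ldots,z_n$ are pairwise distinct. A direct calculation gives
\[
\frac{\partial\xi}{\partial z_i}(t,s,z)=\frac{f'(z_i;\tau',\mu')}{f(z_i;\tau',\mu')}-\frac{f'(z_i;\tau,\mu)}{f(z_i;\tau,\mu)}=:g_{t,s}(z_i),
\]
so condition~(d) fails at $(t,s,z)$ precisely when the univariate function $g_{t,s}$ vanishes at every $z_i$. The strategy is to bound the number of real zeros of $g_{t,s}$ uniformly over $t\neq s$ and then to invoke $J\le\sqrt n$ to rule out that $n$ distinct sample points all land in that zero set.

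Clearing denominators reduces the question to the zeros of $G(z):=f'(z;\tau',\mu')f(z;\tau,\mu)-f'(z;\tau,\mu)f(z;\tau',\mu')$. The Gaussian product identity $\phi(z-a)\phi(z-b)\propto e^{-z^2}e^{(a+b)z}$ factors $G(z)=e^{-z^2}h(z)/(2\pi)$, where $h$ is a linear combination of at most $J^2$ exponentials $e^{(\mu_j+\mu'_k)z}$ with coefficients proportional to $\tau_j\tau'_k(\mu'_k-\mu_j)$. The standard Rolle-based induction for real exponential polynomials then bounds the number of real zeros of $h$ (hence of $g_{t,s}$) by $J^2-1$, provided $h\not\equiv 0$. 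To exclude $h\equiv 0$ when $t\neq s$: vanishing of $g_{t,s}$ on $\R$ would force the two log-density derivatives to agree, so the densities would differ by a multiplicative constant, and since both integrate to one they would coincide; Teicher's (1963) identifiability of finite Gaussian mixtures, combined with the strict ordering of means, then yields $(\tau,\mu)=(\tau',\mu')$, contradicting $t\neq s$.

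Putting the pieces together, for each $t\neq s$ the function $g_{t,s}$ has at most $J^2-1\le n-1$ real zeros, while $z\in\Z$ contributes $n$ distinct real numbers, so some $z_i$ satisfies $g_{t,s}(z_i)\neq 0$. Condition~(d) therefore holds on all of $\Xi$, Assumption Generic is verified, and Lemma~1 delivers the conclusion. The main obstacle I expect is the uniform zero-count for $g_{t,s}$: the Gaussian product identity is what exposes the exponential-polynomial structure, and the bound $J^2-1$ is exactly what the hypothesis $J\le\sqrt n$ converts into the required strict inequality $J^2-1<n$; the remaining ingredients (Teicher identifiability and the Rolle-type bound on exponential polynomials) are standard.
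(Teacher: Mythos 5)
Your proposal is correct, and its skeleton coincides with the paper's: verify the first three assumptions directly, verify Assumption Generic through condition (d) alone on the full-measure set of samples with distinct coordinates, expose the exponential structure via the Gaussian product identity $\phi(z-a)\phi(z-b)\propto e^{-z^2}e^{(a+b)z}$, and use $J^2\le n$ to guarantee enough derivatives $\partial/\partial z_i$. Where you diverge is in how the decisive counting step is executed. You fix $(t,s)$ and bound the number of real zeros of the score difference $g_{t,s}$ by $J^2-1$ via the Rolle induction for exponential polynomials, disposing of the degenerate case $g_{t,s}\equiv 0$ by integrating the score identity to get equality of densities and then invoking Teicher's identifiability of finite Gaussian mixtures. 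The paper instead fixes $z$ and views the $n$ equations as a linear system in the (collapsed) coefficient vector indexed by equivalence classes of exponents $\nu_j+\mu_k$; it concludes the coefficients vanish because the $n\times M$ generalized Vandermonde matrix $\bigl(e^{z_i|g|}\bigr)$ is strictly totally positive (Karlin) and $M\le J^2\le n$, and then runs a self-contained induction on the ordered equivalence classes to show vanishing coefficients force $\mu=\nu$, equal numbers of components, and proportional (hence equal) weights. The two mechanisms are essentially dual — full column rank of the generalized Vandermonde matrix is the linear-algebra face of the $\le M-1$ zero bound for exponential polynomials — so nothing substantive is lost either way. Your route is more elementary and shorter, at the cost of importing identifiability as an external citation; the paper's route is self-contained (its induction effectively re-proves location-mixture identifiability) and, because it is carried out over mixtures with any number of components $K\le J$, it yields the stronger statement that is reused verbatim to prove Corollary 1, which your $T_J$-only argument would not immediately deliver.
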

Remarks: 
\begin{enumerate}
\item The proof of Theorem 1 verifies condition (d) in Assumption Generic by taking derivatives with respect to $z$ and arguing that $\frac{d}{dz_i}Q(\tau,\mu,z)=\frac{d}{dz_i}Q(\varsigma,\nu,z)$ for all $i=1,...,n$ implies $\tau=\varsigma$ and $\mu=\nu$. 
\item The assumption that $J\le \sqrt{n}$ is an artifact of the proof. The proof needs $n$ to be large so that there are enough derivatives with respect to $z_i$ for the argument in the first remark to be successful. In practice, the assumption that $J\le\sqrt{n}$ is weak. Practical uses of finite mixture models require very few components relative to the sample size. 
\item Theorem 1 does not require the model to be correctly specified. The proof only requires that $z_i$ is continuously distributed. 
\item Theorem 1 demonstrates how Lemma 1 can be used to verify uniqueness of M-estimators. It is stated for a normal mixture, but the proof also covers any mixture of a 1-parameter exponential family. In addition, the proof of Theorem 1 can be extended to any mixture of a $p$-parameter exponential family. 
\item The mixture of a $p$-parameter exponential family includes, as a special case, a normal mixture with unknown variance, but with an important caveat. As a variance parameter goes to zero, the likelihood diverges. Thus, the assumption that $Q(t,z)$ is a real-valued function rules out values of the variance equal to zero. If no lower bound is placed on the variance, no global minimum exists. With a lower bound on the variance, an extension to the proof of Theorem 1 gives almost sure uniqueness. 
\end{enumerate}

Theorem 1 requires $\mu_1<\mu_2<...<\mu_J$. In practice, $Q(\tau,\mu,z)$ is usually minimized without this restriction, resulting in nonuniqueness of the argmin. In this case, Lemma 1 can still be used to characterize the argmin given any one global minimizer. 
When the global minimizer has all distinct means, the argmin is composed of all permutations of the components. An additional complication arises when the global minimizer happens to have multiple components with exactly the same mean. In that case, one can reweight the identical components to find other global minimizers. Corollary 1, below, states this characterization in a way that covers both cases. 

Let $\nu^*$ denote another $J$-vector of means, and let $\varsigma^*$ denote another $J$-vector of positive weights that sum to $1$. 
\begin{corollary}
If $J\le \sqrt{n}$ and if $(\tau^*,\mu^*)$ is a global minimizer of $Q(\tau,\mu,z)$ over the unrestricted parameter space, then the argmin of $Q(\tau,\mu,z)$ over the unrestricted parameter space is equal to 
\[
\{(\varsigma^*,\nu^*): \sum_{j=1}^J\varsigma_j^*\mathds{1}\{\nu^*_j=\mu^*_k\} =\sum_{j=1}^J\tau_j^*\mathds{1}\{\mu^*_j=\mu^*_k\}  \text{ for all } k=1,...,J\} 
\]
almost surely-z. 
\end{corollary}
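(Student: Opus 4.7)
The corollary asserts that the argmin equals the equivalence class of $(\tau^*, \mu^*)$ under the relation ``induces the same mixing measure,'' since the condition $\sum_{j=1}^J \varsigma_j^* \mathds{1}\{\nu_j^* = \mu_k^*\} = \sum_{j=1}^J \tau_j^* \mathds{1}\{\mu_j^* = \mu_k^*\}$ for all $k$ is exactly the statement that $\sum_j \varsigma_j^* \delta_{\nu_j^*} = \sum_j \tau_j^* \delta_{\mu_j^*}$ as measures on $\R$ (summing the equalities over distinct values of $\mu_k^*$ forces the support of $(\varsigma^*, \nu^*)$ to lie in $\{\mu_1^*, \ldots, \mu_J^*\}$, since both weight vectors total $1$). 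The proof therefore splits into two directions. The ``$\supseteq$'' inclusion is immediate: since $Q(\tau, \mu, z) = -\sum_i \log f(z_i; P_{\tau, \mu})$ depends on $(\tau, \mu)$ only through the mixing measure $P_{\tau, \mu} = \sum_j \tau_j \delta_{\mu_j}$, any parameter vector with $P_{\varsigma^*, \nu^*} = P_{\tau^*, \mu^*}$ attains the same value of $Q$ and is a global minimizer.

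For the ``$\subseteq$'' inclusion the plan is to reapply Lemma 1 using the mixing measure itself as the optimization variable. Let $\mathcal{M}_k$ denote the space of probability measures on $\R$ with exactly $k$ distinct atoms, parameterized canonically by $(\tilde\tau, \tilde\mu)$ with $\tilde\tau_j > 0$, $\sum_j \tilde\tau_j = 1$, and $\tilde\mu_1 < \cdots < \tilde\mu_k$; this is a manifold of dimension $2k-1$. Take $T = \bigsqcup_{k=1}^J \mathcal{M}_k$, which satisfies Assumption Manifold as a disjoint union of manifolds. Continuous Differentiability of $\bar Q(P, z) := -\sum_i \log f(z_i; P)$ follows from the smoothness of $\phi$. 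Every $(\tau, \mu)$ in the original (unrestricted) parameter space projects to a point in some $\mathcal{M}_k$ with $k \le J$, so if the argmin of $\bar Q(\cdot, z)$ over $T$ is a single measure almost surely, then every global minimizer $(\varsigma^*, \nu^*)$ of the original $Q$ must induce the same mixing measure as $(\tau^*, \mu^*)$, which is the desired characterization.

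The heart of the proof is verifying Assumption Generic on $T$, and the decisive clause is condition (d). For distinct $P, P' \in T$, I would compute $\frac{d}{dz_i}[\log f(z_i; P) - \log f(z_i; P')]$, which is a rational function of $z_i$ whose numerator and denominator have degrees controlled by the combined number of support points of $P$ and $P'$ (at most $2J$). The argument used in the proof of Theorem 1 shows that if $\frac{d}{dz_i}\log f(z_i; P) = \frac{d}{dz_i}\log f(z_i; P')$ simultaneously for all $i = 1, \ldots, n$, then identifiability of finite normal mixtures forces $P = P'$, since the two rational functions must then coincide as functions of $z$; the hypothesis $J \le \sqrt{n}$ guarantees enough equations to pin the measures down. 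The main obstacle is adapting Theorem 1's calculation, which is stated for fixed $J$ with ordered means, to the case where $P$ and $P'$ may lie in different strata $\mathcal{M}_k$ and $\mathcal{M}_{k'}$ and therefore have disjoint or overlapping support sets of different sizes. Fortunately, the degree bound on the rational expressions depends only on the total number of atoms involved (at most $2J$, hence at most $2\sqrt{n}$), so the same polynomial-counting argument goes through essentially unchanged, delivering condition (d) throughout $T$ and hence uniqueness via Lemma 1.
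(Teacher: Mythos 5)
Your proposal follows essentially the same route as the paper: its proof of Corollary 1 also reapplies Lemma 1 to the canonical (compressed, ordered) parameterization over the disjoint union $\cup_{K=1}^J T_K$ --- precisely your strata of mixing measures with exactly $k$ atoms --- and then proves both inclusions exactly as you do (equal mixing measures give equal $Q$; a minimizer violating the condition compresses to a point of $T$ distinct from the compression of $(\tau^*,\mu^*)$, contradicting almost-sure uniqueness over $T$). The cross-strata verification of Assumption Generic (d) that you flag as the main obstacle is already contained in the paper's proof of Theorem 1, which is stated for $(\tau,\mu)\in T_{K_1}$, $(\varsigma,\nu)\in T_{K_2}$ with arbitrary $K_1,K_2\le J$ and uses $n\ge J^2\ge K_1K_2$ (the number of distinct exponents $\nu_j+\mu_k$, not your looser count of $2J$), so no further adaptation is needed.
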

Remarks: 
\begin{enumerate}
\item Corollary 1 states that the set of all global minimizers can be computed by permuting and reweighting any one global minimizer. 
\item The proof of Corollary 1 demonstrates a general argument for characterizing the argmin of a random objective function, even when the argmin is not unique. The proof transforms the parameter space to a nonredundant version, and then applies Lemma 1. 
\end{enumerate}

\subsubsection{Penalized Likelihood}

Nonconvexity may also arise from a penalty term. Nonconvex penalties are popular because they have desirable properties for recovering sparsity. 
The $L_0$ penalty is the most direct way to impose sparsity. The $L_q$ penalty for $q\in (0,1)$, or bridge penalty, is a continuous penalty that still leads to sparse estimates. \cite{FanLi2001}, \cite{FanPeng2004}, and \cite{FanXueZou2014} 
consider a class of folded concave penalties, including the smoothly clipped absolute deviation (SCAD) penalty. 
\cite{Zhang2010} proposes the minimax concave penalty (MCP), which minimizes the nonconvexity of the penalty subject to constraints. \cite{LohWainwright2017} 
consider a class of nonconvex penalties and give conditions for variable selection consistency. 

In particular, the global minimizer using these penalties has desirable properties. 
\cite{ZhangZhang2012} show that the global minimizer has desirable recovery performance. They also show that the global minimizer is the unique sparse local solution. Also, \cite{HuangHorowitzMa2008} 
show an oracle property for the global minimizer with the bridge penalty, and \cite{KimChoiOh2008} 
show an oracle property for the global minimizer with the SCAD penalty. 

We show that the global minimizer of the penalized likelihood is unique almost surely in the case of the linear regression model with a wide variety of penalties, including all the penalties mentioned above. Let 
\[
Y=X\beta+\epsilon,
\]
where $Y$ is an $n\times 1$ vector and $X$ is a $n\times d$ matrix. We estimate $\beta$ by minimizing 
\[
Q(\beta,Y,X)=\frac{1}{2}\|Y-X\beta\|^2+p(\beta),
\]
where $p(\beta)$ is a penalty term, over $\beta\in B\subset \mathbb{R}^d$. 

\begin{theorem}
Assume $B=\cup_{j=1}^J B_j$, where each $B_j$ is a manifold, possibly with boundary or corner, such that $p(\beta)$ is continuous over each $B_j$. If $X$ is full rank $d$ and the distribution of $Y$ conditional on $X$ is absolutely continuous, then the argmin of $Q(\beta,Y,X)$ over $B$ is unique almost surely. 
\end{theorem}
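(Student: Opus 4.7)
The plan is to apply Lemma 1 conditionally on $X$, with $z=Y$ and $T=B$. Conditional on $X$, Assumption Absolute Continuity is immediate from the hypothesis that $Y\mid X$ is absolutely continuous, and Assumption Manifold is immediate from the decomposition $B=\cup_{j=1}^J B_j$ given in the theorem. The final conclusion (unconditional almost-sure uniqueness) then follows from the conditional version by integrating over the marginal of $X$.

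Next, I would verify Assumption Continuous Differentiability. Part (a) holds because $\frac{1}{2}\|Y-X\beta\|^2$ is jointly continuous in $(\beta,Y)$ and $p(\beta)$ is continuous on each $B_j$ by hypothesis. Part (b) holds because the $Y$-gradient is $\frac{d}{dY}Q(\beta,Y,X)=Y-X\beta$, which is continuous in $(\beta,Y)$. For part (c), the trick is to choose $A(\beta)=\emptyset$ for every $\beta\in B_j$; the remark following Assumption Continuous Differentiability explicitly permits this, and it is exactly what is needed to accommodate penalties $p$ that may fail to be differentiable (e.g., $L_0$, $L_q$ with $q\in(0,1)$, SCAD, MCP).

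The heart of the argument is Assumption Generic, which will be verified solely through condition (d). For any $(\beta^1,\beta^2,Y)\in\Xi$ with $\beta^1\neq\beta^2$, the penalty term cancels in $\xi$ because it does not depend on $Y$, giving
\[
\frac{d}{dY}\xi(\beta^1,\beta^2,Y)=\frac{d}{dY}\!\left(\tfrac{1}{2}\|Y-X\beta^1\|^2-\tfrac{1}{2}\|Y-X\beta^2\|^2\right)=X(\beta^2-\beta^1).
\]
Since $X$ has full column rank $d$ and $\beta^2-\beta^1\neq 0$, this $n$-vector is nonzero, so condition (d) of Assumption Generic holds for every $(\beta^1,\beta^2,Y)\in\Xi$. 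With all four assumptions in place, Lemma 1 yields uniqueness of the argmin almost surely conditional on $X$, and hence almost surely unconditionally.

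There is essentially no major obstacle: the nonconvexity and possible nondifferentiability of $p(\beta)$ are bypassed entirely by taking $A(\beta)=\emptyset$ and leveraging only the $Y$-derivative, while the full rank of $X$ makes condition (d) trivial. The only point requiring a little care is making sure that the chosen $A(\beta)=\emptyset$ is consistent with Assumption Continuous Differentiability (c) being vacuous and with Assumption Generic not relying on conditions (b) or (c); both points are addressed by the remarks following those assumptions in Section 2.
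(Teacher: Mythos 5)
Your proposal is correct and follows essentially the same route as the paper: condition on $X$, take $z=Y$, make the directional-derivative set trivial (the paper uses $A(t)=\{0\}$ rather than $\emptyset$, which is equivalent in effect) so no differentiability in $\beta$ is needed, and verify Assumption Generic via condition (d) by computing $\frac{d}{dY}\xi(\beta^1,\beta^2,Y)=X(\beta^2-\beta^1)\neq 0$ from the full column rank of $X$. No gaps.
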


Remarks: 
\begin{enumerate}
\item The assumption that $X$ is full rank is the same condition for uniqueness in unpenalized least squares. It is surprising that the only additional condition needed for uniqueness in penalized least squares is absolute continuity of $Y$ conditional on $X$. 
\item Theorem 2 accommodates a wide variety of nonconvex penalties. Even some discontinuous penalties can be accommodated, including the $L_0$ penalty by partitioning the parameter space into all possible combinations of $\beta_j=0$ and $\beta_j\neq 0$. 
\item Theorem 2 is stated for the linear regression model, but the argument can be used for more general penalized likelihood models. In fact, if we verify uniqueness of the unpenalized likelihood using Lemma 1, where we verify Assumption Generic using condition (d), as in the finite mixture model, then uniqueness of the global minimizer of the penalized likelihood follows by the same argument for any continuous, deterministic penalty. 
\end{enumerate}

\subsection{The Argmin Theorem}

In many cases, limit theory for M-estimators follows from the argmin theorem (see \cite{KimPollard1990} 
or \cite{VaartWellner1996}). 
An important condition in the argmin theorem is that the limiting stochastic process has a unique minimum almost surely. 

The uniqueness condition has been analyzed in the case that the limiting stochastic process is, itself, a Gaussian process. Papers considering this case include \cite{Lifshits1982}, \cite{KimPollard1990}, \cite{Arcones1992}, \cite{MullerSong1996}, and \cite{Ferger1999}. 
The arguments used in these papers are all specific to proving uniqueness of the minimizer of a Gaussian process, rather than a more general function of a Gaussian process. In addition, \cite{Pimentel2014} and \cite{LopezPimentel2016} characterize uniqueness using differentiability of a perturbation-expectation operator, which is useful in some examples. 

Lemma 1 provides a new technique for verifying the uniqueness condition. In addition to covering the case where the limit is, itself, a Gaussian stochastic process, Lemma 1 is applicable to the more general setting, where the limit is a function of a Gaussian process. We demonstrate the usefulness of Lemma 1 using two novel applications in this setting: p-value based threshold regression and weak identification. 

\subsubsection{Threshold Regression}

Consider the threshold regression model of \cite{MallikBanerjeeSen2013}: 
\[
Y=\mu(X)+\epsilon,
\]
where $\mu(\cdot)$ is a \textit{continuous} function that is equal to a fixed value $\tau$ for $X\le d_0$ and is strictly larger than $\tau$ for $X>d_0$. The parameter of interest is the threshold $d_0$ estimated by a p-value based M-estimator. 

\cite{MallikBanerjeeSen2013} characterize the limit of the objective function as a functional of a Gaussian process. Specifically, let $W(t)$ with $t\in\mathbb{R}$ be a Gaussian process with almost surely continuous sample paths, continuous drift $m(t)$ and continuous covariance kernel $\Sigma_{t_1,t_2}$. 
The limiting objective function is 
\begin{equation}
Q(t,W(\cdot))=\int_0^t \Phi(W(y))dy-t\gamma,
\end{equation}
where $\Phi(\cdot)$ is the standard normal cdf and $\gamma$ is a constant. This defines a functional of a Gaussian process. \cite{MallikBanerjeeSen2013} 
are unable to prove that (3.1) has a unique minimum almost surely, but assume uniqueness in order to invoke the argmin theorem. We show that under the assumption $\Sigma_{t,s}>0$ for all $t, s\in\mathbb{R}$ (which follows from Assumption 3(a) and Lemma 2 in \cite{MallikBanerjeeSen2013}), 
the minimum is indeed almost surely unique. 
\begin{theorem}
If $\Sigma_{t,s}>0$ for all $s,t\in\mathbb{R}$, then the argmin of $Q(t,W(\cdot))$, defined in equation (3.1), over $t\in\mathbb{R}$ is unique almost surely. 
\end{theorem}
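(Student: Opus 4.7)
The plan is to verify the conditions of Lemma 1 after reducing the infinite-dimensional Gaussian randomness to a one-dimensional absolutely continuous marginal, following Remark 1 after Assumption Absolute Continuity. The key observation is that the positivity hypothesis $\Sigma_{t,s}>0$ is exactly what makes condition (d) of Assumption Generic easy to verify: perturbing $W$ along one well-chosen Gaussian coordinate makes the $z$-derivative of $\xi(t,s,\cdot)$ an integral of a strictly positive function over $[s,t]$, so it cannot vanish whenever $t\neq s$.

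Concretely, I would fix an arbitrary $t_0\in\R$, set $z=W(t_0)$, and use the Gaussian conditioning decomposition
\[
W(y)=m(y)+\frac{\Sigma_{y,t_0}}{\Sigma_{t_0,t_0}}\bigl(z-m(t_0)\bigr)+V(y),
\]
where $V(\cdot)$ is a mean-zero continuous Gaussian process independent of $z$. After conditioning on $V$, the scalar $z$ is absolutely continuous and $T=\R$ is a smooth manifold without boundary, so Assumptions Absolute Continuity and Manifold hold at once. For Assumption Continuous Differentiability, the conditional objective is continuous in $(t,z)$ with $t$-derivative $\Phi(W(t))-\gamma$ and $z$-derivative $\int_0^t\phi(W(y))\,\Sigma_{y,t_0}/\Sigma_{t_0,t_0}\,dy$; continuity of both follows from continuity of the sample path of $W$ and dominated convergence.

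To verify Assumption Generic I would invoke condition (d) at every $(t,s,z)\in\Xi$:
\[
\frac{d}{dz}\xi(t,s,z)=\int_s^t\phi\bigl(W(y)\bigr)\frac{\Sigma_{y,t_0}}{\Sigma_{t_0,t_0}}\,dy.
\]
The integrand is strictly positive everywhere because $\phi>0$ and $\Sigma_{y,t_0}>0$ by hypothesis, so the integral has the sign of $t-s$ and is nonzero whenever $t\neq s$. Lemma 1 then yields almost sure uniqueness of the argmin conditional on $V$, and Fubini delivers the unconditional statement.

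The main obstacle is conceptual: recognizing that the RKHS element $\Sigma(t_0,\cdot)$ supplies the right direction of perturbation so that the positivity assumption reduces the check of condition (d) to a one-line sign argument. By contrast, attempting to rely on conditions (a)--(c) would force one to control $\int_s^t\Phi(W(y))dy$ simultaneously with the first-order conditions $\Phi(W(t))=\gamma=\Phi(W(s))$, which is precisely the difficulty that led \cite{MallikBanerjeeSen2013} to assume uniqueness rather than prove it.
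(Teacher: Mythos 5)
Your proposal is correct and follows essentially the same route as the paper: a Gaussian conditioning decomposition that isolates a scalar, absolutely continuous coordinate of $W(\cdot)$, verification of Assumption Generic via condition (d) using $\frac{d}{dz}\xi(t,s,z)=\int_s^t\phi(W(y))\,\Sigma_{y,t_0}\Sigma_{t_0,t_0}^{-1}dy>0$ by the positivity of the kernel, and then Lemma 1 conditional on the independent remainder process. The only (immaterial) difference is that the paper first truncates to $[-M,M]$ and conditions on $Z=W(M)-m(M)$ at the interval endpoint before taking a countable union over $M$, whereas you apply Lemma 1 directly on $T=\mathbb{R}$ with an arbitrary fixed conditioning point $t_0$, which is equally valid since Lemma 1 does not require compactness of $T$.
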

Remark: 
\begin{enumerate}
\item The proof of Theorem 3 demonstrates how Lemma 1 can accommodate an infinite dimensional source of randomness: by taking derivatives with respect to $Z=W(M)$, for fixed values of $M$. 
\end{enumerate}

\subsubsection{Limit Theory for Weakly Identified Parameters}

Limit theory for estimators of weakly identified parameters relies on the argmin theorem. This requires a unique minimum assumption on the limit of the profiled objective function. Papers that use this assumption include \cite{StockWright2000}, \cite{AndrewsCheng2012}, \cite{Cheng2015}, \cite{Cox2019}, and \cite{HanMcCloskey2018}. 
\cite{AndrewsCheng2012} provide sufficient conditions for uniqueness in the special case that the key parameter, which determines the strength of identification, is scalar. However, examples that require a vector of key parameters, including \cite{Cheng2015} and \cite{Cox2019}, 
can benefit from the low-level sufficient conditions stated in this paper. 

Following \cite{Cox2019}, one first reparameterizes the model into the identified parameters, $\beta$, and the unidentified parameters, $\pi$. Next, one defines a function, $h: (\beta,\pi)\mapsto h(\beta,\pi)\in\mathbb{R}^{d_h}$, that maps structural parameters to identified reduced-form parameters. $h$ has the property that for some values of $\beta$, $h$ is injective as a function of $\pi$, and then $\pi$ is identified, but for other values of $\beta$, $h$ is not injective as a function of $\pi$, and then $\pi$ is not identified. In this way, the identifiability of $\pi$ depends on the true value of $\beta$. The simplest example of a function satisfying this property is $h(\beta,\pi)=\beta\pi$, which is injective as a function of $\pi$ if and only if $\beta\neq 0$. 

Consider an estimator $(\hat\beta,\hat\pi)$ that minimizes a random objective, $Q_n(\beta,\pi)$. To derive the asymptotic distribution of $\hat\pi$ we consider the profiled objective, $Q_n^p(\pi)=\min_\beta Q_n(\beta,\pi)$. Appropriately standardized, this converges to a limiting stochastic process over $\pi$ whose argmin characterizes the asymptotic distribution, if it is unique. In what follows we give the formula for the limit, as well as sufficient conditions for the argmin to be unique. 

Let parameters $\beta$ and $\pi$ have dimensions $d_\beta$ and $d_\pi$, respectively. We characterize the limit along a sequence of true values of the parameters, $\beta_n$ converging to $\beta_0$, a point for which $h$ is not an injective function of $\pi$. These sequences lead to an intermediate identification strength, called weak identification, indexed by a local parameter $b\in\R^{d_\beta}$. The asymptotic distributions are continuous in this local parameter, and thus are the appropriate sequences for contiguity. The case $b=0$ is an important special case corresponding to a complete loss of identification. It derives from $\beta_n=\beta_0$ for all $n$. A typical sequence satisfies the following assumption, which says that $\beta_n$ influences the value of $h(\beta,\pi)$ at the $\sqrt{n}$ rate. 
\begin{namedassumption}[Weak Identification]\mbox{}
\begin{enumerate}[label=(\alph*)]
\item $h(\beta,\pi)$ is twice continuously differentiable, and 
\item for some $b\in\mathbb{R}^{d_\beta}$, $\sqrt{n}\left[h(\beta_n,\pi)-h(\beta_0,\pi)\right]\rightarrow h_\beta(\beta_0,\pi)b$, uniformly on compact sets over $\pi$, where $h_\beta(\beta,\pi)$ denotes the derivative of $h(\beta,\pi)$ with respect to $\beta$.  
\end{enumerate}
\end{namedassumption}

In this application, the parameter $\pi$ serves the same purpose as $t$, indexing the domain of the random function. The domain is the identified set for $\pi$ under identification loss. Allowing the domain to be a union of manifolds is useful because the identified set often has an unusual shape that may be difficult to characterize exactly. Following calculations in \cite{Cox2019}, there exists a continuous random vector, $z$, with dimension $d_z=d_\beta+d_h$, and there exists a symmetric and positive definite $d_z\times d_z$ matrix, $H$, such that the limit of the profiled objective function is 
\begin{equation}
Q(\pi,z)=2z'{H^{1/2}}'g(\pi)-(H^{1/2}z+g(\pi))'P(\pi)(H^{1/2}z+g(\pi))+\kappa(\pi), 
\end{equation}
where 
\begin{align*}
g(\pi)&=H^{1/2}\left[\begin{array}{c}0_{d_\beta\times 1}\\ h_\beta(\beta_0,\pi)b-h_\beta(\beta_0,\pi_0)b\end{array}\right],\\
S(\pi)&=H^{1/2}\left[\begin{array}{c}I_{d_\beta}\\ h_\beta(\beta_0,\pi)\end{array}\right],\\
P(\pi)&=S(\pi)[S(\pi)'S(\pi)]^{-1}S(\pi)', 
\end{align*}
and $\kappa(\pi)$ is some continuous deterministic function of $\pi$. 
Notice that this limit is indexed by a finite dimensional random vector, $z$, rather than an infinite dimensional stochastic process, and hence is more susceptible to nonuniqueness. 

The following theorem places low-level conditions on $h$ in order to show that the argmin of $Q(\pi,z)$ over $\pi$ is almost surely unique.
\begin{theorem}
Assume the identified set can be written as a finite or countable disjoint union of second-countable Hausdorff manifolds. Let $h(\beta,\pi)$ and the sequence $\beta_n$ satisfy Assumption Weak Identification. Let $Q(\pi,z)$ be defined in equation (3.2). If  
\begin{enumerate}[label=(\alph*)]
\setcounter{enumi}{2}
\item for all $\pi_1\neq \pi_2$, the rank of $h_\beta(\beta_0,\pi_1)-h_\beta(\beta_0,\pi_2)$ is $d_h$, and 
\item there exists an open set $B$ containing $\beta_0$, such that for almost every $\beta\in B$, $h(\beta,\pi)-h(\beta_0,\pi)$ is an injective function of $\pi$. 
\end{enumerate}
Then, $\xi(\pi_1,\pi_2,z)=Q(\pi_1,z)-Q(\pi_2,z)$ satisfies Assumption Generic. Therefore, by Lemma 1, the argmin of $Q(\pi,z)$ over $\pi$ is unique almost surely. 
\end{theorem}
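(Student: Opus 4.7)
The plan is to verify the hypotheses of Lemma~1 for $Q(\pi,z)$ defined in equation~(3.2), with $T$ equal to the identified set. Assumption Absolute Continuity and Assumption Manifold are given directly by hypothesis. Assumption Continuous Differentiability is immediate from inspection of~(3.2): $Q(\pi,z)$ is affine-plus-quadratic in $z$, and its coefficients depend continuously (indeed smoothly) on $\pi$ via the $C^2$ regularity of $h_\beta(\beta_0,\cdot)$ granted by Assumption Weak Identification~(a), together with a continuous-enough $\kappa$. So the whole argument reduces to verifying Assumption Generic for $\xi(\pi_1,\pi_2,z)=Q(\pi_1,z)-Q(\pi_2,z)$.

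Following the first remark after Lemma~1, I would let condition~(d) of Assumption Generic do most of the work. A direct calculation from~(3.2) gives
\[
\frac{d}{dz}\xi(\pi_1,\pi_2,z)=2H^{1/2}\bigl[(I-P(\pi_1))g(\pi_1)-(I-P(\pi_2))g(\pi_2)\bigr]-2H^{1/2}\bigl[P(\pi_1)-P(\pi_2)\bigr]H^{1/2}z.
\]
The key structural step is to show that whenever $\pi_1\neq\pi_2$, condition~(c) of Theorem~4 forces $P(\pi_1)\neq P(\pi_2)$. Since $S(\pi)=H^{1/2}\bigl[I_{d_\beta};\,h_\beta(\beta_0,\pi)\bigr]$ has full column rank (the top block is the identity), equality of the column spaces would yield an invertible $M$ with $\bigl[I_{d_\beta};\,h_\beta(\beta_0,\pi_2)\bigr]=\bigl[I_{d_\beta};\,h_\beta(\beta_0,\pi_1)\bigr]M$; the top $d_\beta$ rows pin $M=I_{d_\beta}$, and the bottom rows then force $h_\beta(\beta_0,\pi_1)=h_\beta(\beta_0,\pi_2)$, contradicting the full-row-rank hypothesis. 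With $P(\pi_1)\neq P(\pi_2)$, the coefficient on $z$ in $\frac{d}{dz}\xi$ is a nonzero matrix, so $\{z:\frac{d}{dz}\xi=0\}$ is a proper affine subspace $L\subset\mathbb{R}^{d_z}$, and for every $z\notin L$ condition~(d) of Assumption Generic is satisfied.

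The remaining task is the behavior on $L$. Substituting the first-order relation $[P(\pi_1)-P(\pi_2)]H^{1/2}z=(I-P(\pi_1))g(\pi_1)-(I-P(\pi_2))g(\pi_2)$ back into the quadratic form of $\xi$ collapses it to an affine function of $z$ on $L$, so the sub-locus $\{\xi=0\}\cap L$ is at most a hyperplane within $L$. For every $z\in L$ off this sub-locus, condition~(a) of Assumption Generic holds. The main obstacle is the residual set on which $\xi=0$ and $\frac{d}{dz}\xi=0$ hold simultaneously: there I need either~(b) or~(c). My plan is to exploit condition~(d) of Theorem~4 together with the tangent-cone first-order conditions in $\pi$: if both $\pi_1$ and $\pi_2$ were simultaneously tangent-cone local minimizers of $Q(\cdot,z)$, the coincident critical-point equations, once unpacked in terms of $h_\beta(\beta_0,\cdot)b$, would yield $h(\beta,\pi_1)-h(\beta_0,\pi_1)=h(\beta,\pi_2)-h(\beta_0,\pi_2)$ for $\beta$ in a positive-measure subset of the neighborhood $B$, contradicting the almost-sure injectivity in condition~(d). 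Executing this final contradiction cleanly, while accommodating boundary and corner points of the identified set through the tangent-cone set $A(\pi)$, is the step I expect to be most technically involved; everywhere else one of~(a), (b), (c), (d) already holds, so Lemma~1 then delivers the conclusion.
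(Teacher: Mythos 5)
Your reduction to Assumption Generic and the observation that, for fixed $\pi_1\neq\pi_2$, condition (c) gives $P(\pi_1)\neq P(\pi_2)$ so that $\frac{d}{dz}\xi(\pi_1,\pi_2,\cdot)$ vanishes only on a proper affine subspace $L(\pi_1,\pi_2)$, are fine as far as they go. But the argument has a genuine gap exactly where you defer it. First, a local error: on $L$ the full $z$-gradient of the quadratic $\xi$ vanishes, so $\xi$ is \emph{constant} on $L$, not affine; hence the bad locus $\{\xi=0\}\cap L$ is either empty or all of $L$, and in the degenerate case you are left with an entire affine subspace of $z$'s (for that pair) on which neither (a) nor (d) holds. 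Second, and more fundamentally, your plan to rescue these points via conditions (b)/(c) of Assumption Generic cannot be ``unpacked'' the way you describe: the limiting objective $Q(\pi,z)$ depends on $h$ only through the linearization $h_\beta(\beta_0,\pi)$, so first-order stationarity of $Q(\cdot,z)$ at both $\pi_1$ and $\pi_2$ yields statements about $h_\beta(\beta_0,\pi_i)$, not about the finite differences $h(\beta,\pi)-h(\beta_0,\pi)$; there is no positive-measure set of $\beta$'s to contradict condition (d) with. Indeed the function $\pi\mapsto h_\beta(\beta_0,\pi)(z_1-b)$ need not be injective at all under the theorem's hypotheses — only its ``non-invertible values'' are exceptional — so no pointwise contradiction of this kind is available.

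The paper's proof does not use (a)--(c) of Assumption Generic at all. It defines the single equation $h_\beta(\beta_0,\pi)(z_1-b)=z_2-h_\beta(\beta_0,\pi_0)b$ and proves that the set of $z=(z_1,z_2)$ for which it has more than one solution in $\pi$ has Lebesgue measure zero; this is done by a dedicated lemma (Lemma 7): the map $f(\pi)=h_\beta(\beta_0,\pi)(z_1-b)$ is the uniform-on-compacts limit of the injective maps $-\sqrt{n}\bigl(h(\tilde\beta_n,\pi)-h(\beta_0,\pi)\bigr)$ (injectivity supplied by condition (d) of the theorem through a suitable choice of $\tilde\beta_n$), so its non-invertible values are critical values, which are null by Sard's theorem. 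The complement of this bad set is then taken as $\mathcal{Z}$ — this is essential, because Assumption Generic must hold for a single full-measure $\mathcal{Z}$ simultaneously for \emph{all} pairs $(\pi_1,\pi_2)$; one cannot discard a pair-dependent family of null sets, and your $L(\pi_1,\pi_2)$'s are exactly such a family. On $\mathcal{Z}$, the paper shows $\frac{d}{dz}\xi=0$ is impossible: condition (c) is used in a stronger way than you use it, namely to get linear independence of the ranges of $M(\pi_1)=I-P(\pi_1)$ and $M(\pi_2)=I-P(\pi_2)$, which decouples the stationarity equation into two separate equations asserting that both $\pi_1$ and $\pi_2$ solve the displayed equation above — contradicting $z\in\mathcal{Z}$. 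This Sard-based, uniform-in-$(\pi_1,\pi_2)$ step is the real content of the theorem, and it is missing from your proposal.
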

Remarks: 
\begin{enumerate}
\item Conditions (c) and (d) eliminate degeneracy in $Q(\pi,z)$ as a function of $\pi$ so that Assumption Generic can be verified by taking derivatives with respect to $z$. Condition (c) is a rank condition guaranteeing that $h_\beta(\beta_0,\pi)$ varies enough as a function of $\pi$. A necessary condition is that $d_\pi\le d_h$. 
Condition (d) says that $\pi$ is generically identified locally around $\beta_0$. Below, two examples are given that demonstrate the importance of these two conditions. 
\item The proof of Theorem 4 using Lemma 1 is nontrivial and requires an appeal to Sard's theorem to characterize the critical values of $h_\beta(\beta_0,\pi)(z_1-b)$ as a function of $\pi$. 
\end{enumerate}

\begin{figure}
\begin{center}
\begin{tabular}{cc}
\includegraphics[scale=0.5]{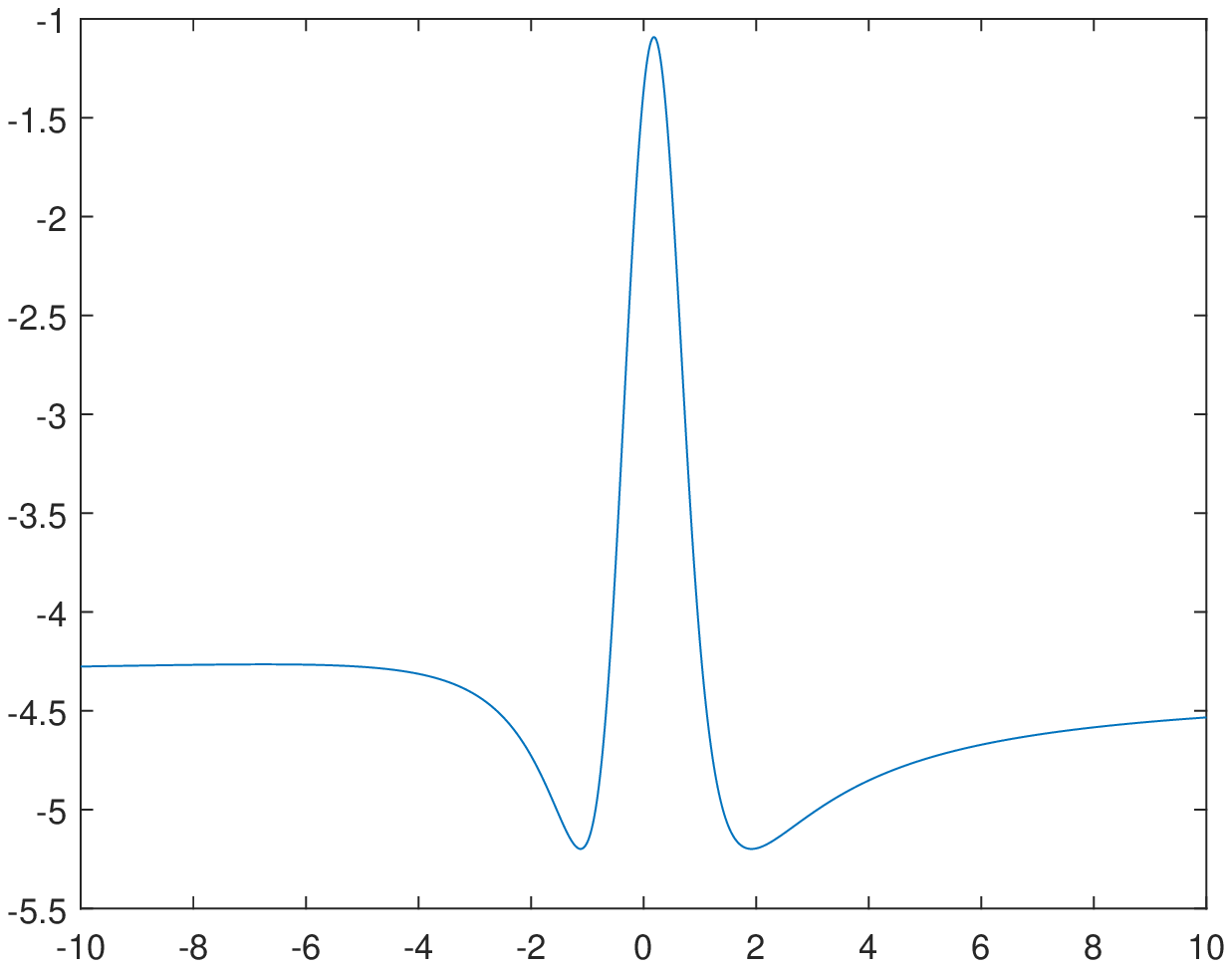}&\includegraphics[scale=0.5]{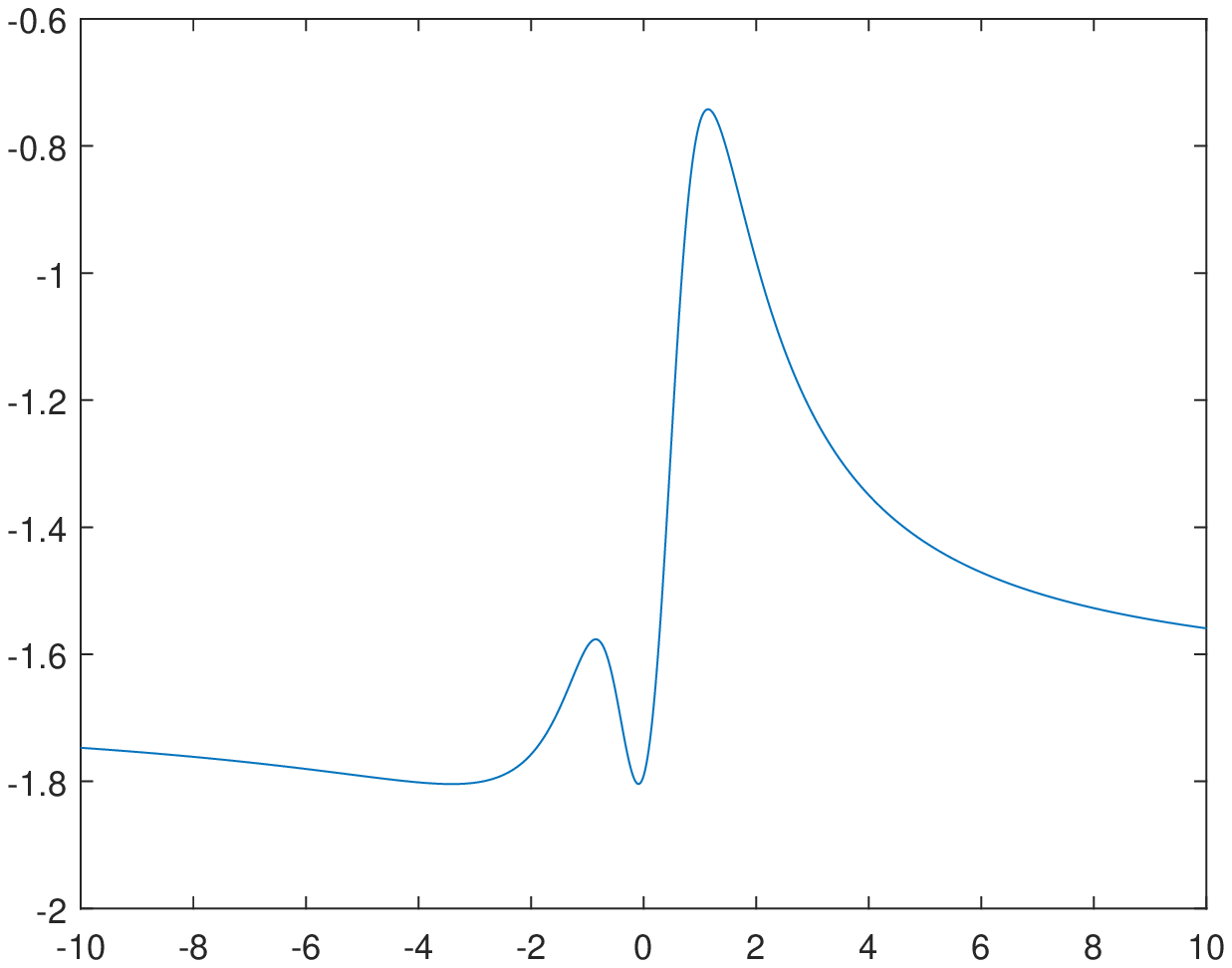}
\end{tabular}
\end{center}
\caption{Sample simulations for $Q(\pi,z)$ from Example 1 when $b=(0,0)'$ and $H=I_3$. In the left panel, $z=(-1.03,1.29,2.77)'$ and in the right panel, $z=(-1.82,-0.52,0.16)'$. These values were chosen randomly and are representative of a positive probability of multiple global minimizers.}
\end{figure}

\begin{example}
This example demonstrates the importance of condition (d), that $\pi$ is generically identified in a neighborhood of $\beta_0$. Consider the model, 
\[
Y_i=\alpha+\beta_1 X_{1i}+\beta_2 X_{2i}+(\beta_1\pi+\beta_2\pi^2)X_{3i}+u_i, 
\]
where $\mathbb{E}(u_i|X_{1i},X_{2i},X_{3i})=0$. In this case, identification of $\pi$ is determined by injectivity of $h(\beta_1,\beta_2,\pi)=\beta_1\pi+\beta_2\pi^2$ in a neighborhood of $(\beta_1,\beta_2)=(0,0)$. Condition (d) is not satisfied because, for any $\beta_1$ and $\beta_2\neq 0$, there exists an $h\in\mathbb{R}$ such that the quadratic equation, $\beta_1\pi+\beta_2\pi^2=h$, has multiple solutions in $\pi$. We can calculate $\frac{d}{d\beta}h(\beta_1,\beta_2,\pi)|_{\beta_1=0, \beta_2=0}=[\pi, \pi^2]$, which satisfies condition (c). If $\pi$ is estimated by nonlinear least squares, the limit of the profiled objective function, $Q(\pi,z)$, has multiple minimizers with positive probability. Figure 1 gives some simulations of this function. 
\end{example}

\begin{figure}
\begin{center}
\begin{tabular}{cc}
\includegraphics[scale=0.5]{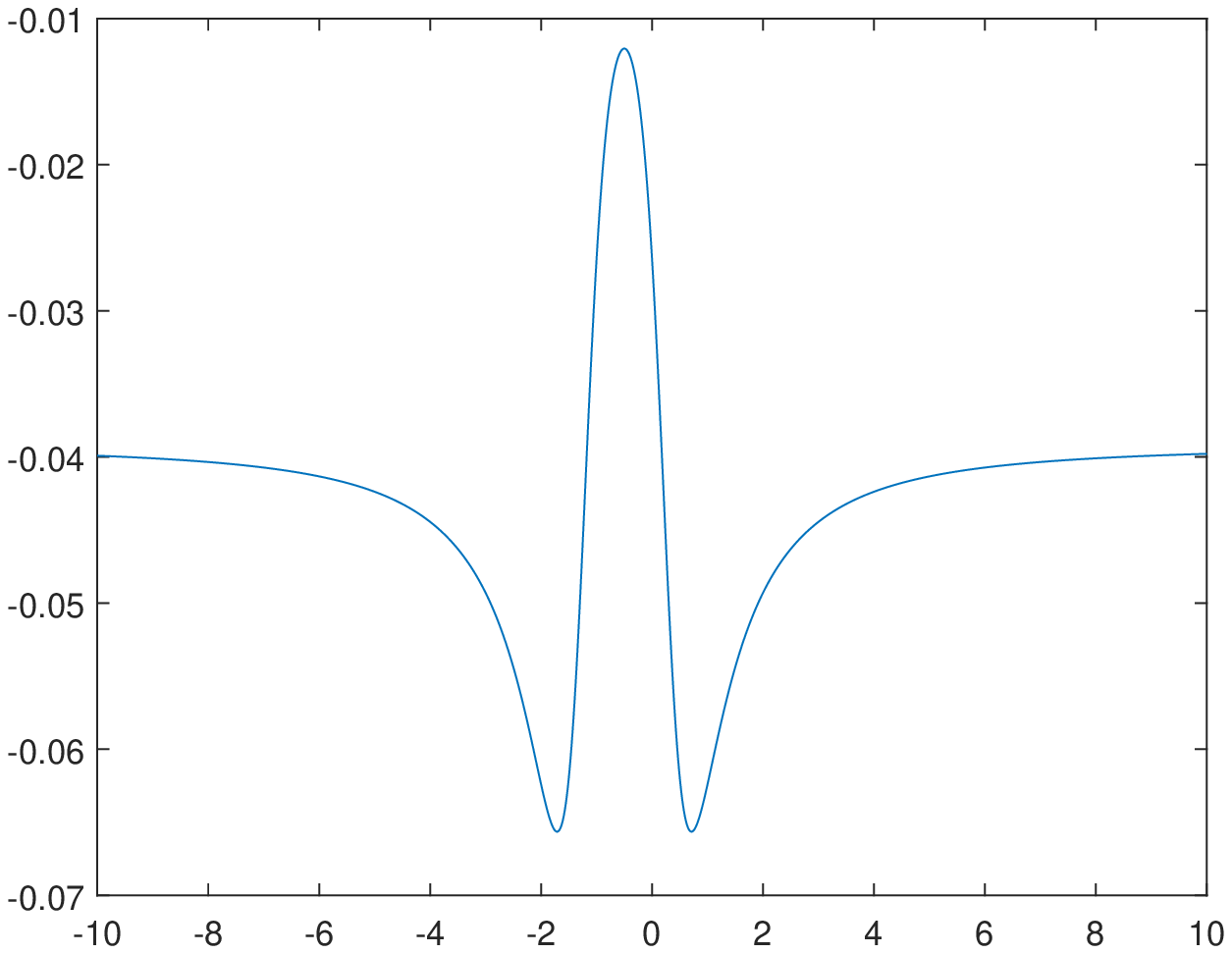}&\includegraphics[scale=0.5]{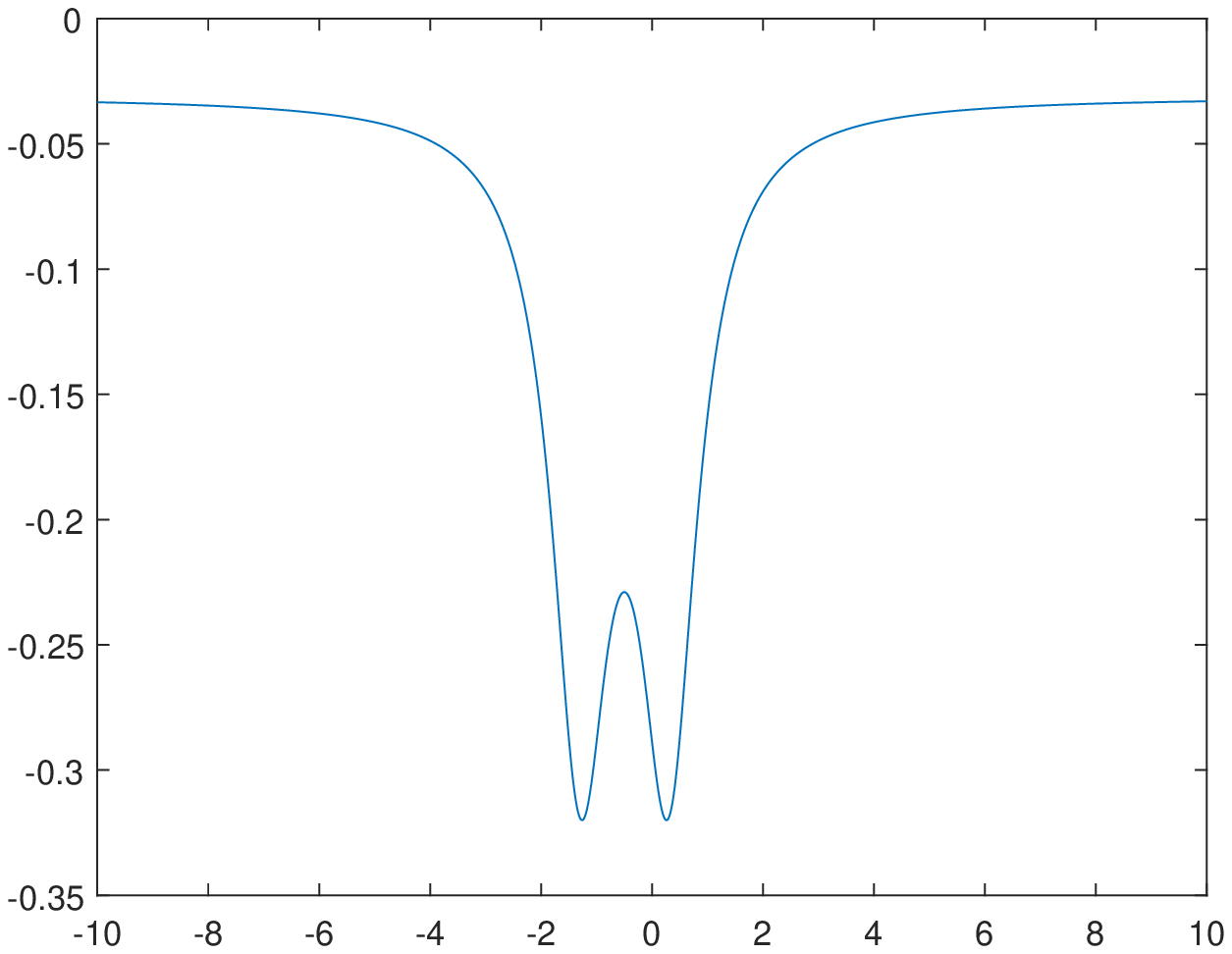}
\end{tabular}
\end{center}
\caption{Sample simulations for $Q(\pi,z)$ from Example 2 when $b=0$ and $H=I_3$. In the left panel, $z=(-0.23,-0.28,1.31)'$ and in the right panel, $z=(-0.76,-0.25,-1.65)'$. These values were chosen randomly and are representative of a positive probability of multiple global minimizers.}
\end{figure}

\begin{example}
This example demonstrates the importance of condition (c), which states that $h_\beta(\beta_0,\pi_1)-h_\beta(\beta_0,\pi_2)$ has full rank, $d_h$, for all $\pi_1\neq \pi_2$. Consider the model, 
\[
Y_i=\alpha+\beta X_{1i}+\beta(\pi+\pi^2) X_{2i}+\beta^2\pi X_{3i}+u_i, 
\]
where $\mathbb{E}(u_i|X_{1i},X_{2i},X_{3i})=0$. In this case, identification of $\pi$ is determined by injectivity of 
\[
h(\beta,\pi)=\left[\begin{array}{c}\beta(\pi+\pi^2)\\ \beta^2\pi\end{array}\right], 
\]
in a neighborhood of $\beta=0$. Condition (d) is satisfied because for any $\beta\neq 0$, $\beta^2\pi$ is injective as a function of $\pi$. We can calculate 
\[
\frac{d}{d\beta}h(\beta,\pi)|_{\beta=0}=\left[\begin{array}{c}\pi+\pi^2\\ 0 \end{array}\right],
\]
which does not satisfy condition (c) because the rank is zero whenever $\pi_1$ and $\pi_2$ are both roots of $\pi+\pi^2=c$. 

If $\pi$ is estimated by nonlinear least squares, the limit of the profiled objective function, $Q(\pi,z)$, has multiple minimizers with positive probability. Figure 2 gives some simulations of this function. The key components of this example are the two functions in $h(\beta,\pi)$ that depend nonlinearly on $\beta$, and contain different amounts of information about $\pi$. As $\beta_n\rightarrow 0$, the more informative function is weaker, and therefore cannot satisfy condition (c). This example is concerning because it seems likely that these key components are present in more complicated weakly identified models. 
\end{example}

\section{The Proof of Lemma 1}

This section gives the proof of Lemma 1. The proof is simple and intuitive. The first subsection reduces the global problem to a local problem, the second subsection states lemmas for the local problem, and the third subsection finishes the proof. The proofs of the additional lemmas are in the appendix. 

\subsection{Global to Local}

We first reduce from minimization over all of $T$ to minimization over a countable collection of compact neighborhoods, $K\in \K_j$, that separate points of $T_j$. Since each $T_j$ is second-countable and Hausdorff, this countable collection always exists. 
For any $K\subset T$, define the value function, 
\[
V(K,z)=\inf_{t\in K} Q(t,z). 
\]

We consider $j_1,j_2\in J$. Let $K\in\mathcal{K}_{j_1}$ and $C\in\mathcal{K}_{j_2}$ be disjoint. Lemma 2 shows that the value of the minimum over disjoint compact sets, $K$ and $C$, being different from each other or from the value of the minimum over all of $T$ is sufficient for the argmin to be unique almost surely. 
\begin{lemma}
Suppose that $\{\Z_k\}_{k=1}^{\infty}$ is a sequence of compact subsets of $\Z$ such that $P(\Z_k)\rightarrow 1$ as $k\rightarrow\infty$, and suppose that for every $k\in\mathbb{N}$, for every $j_1,j_2\in J$, for every $K\in \mathcal{K}_{j_1}$ and $C\in\mathcal{K}_{j_2}$ such that $K\cap C=\emptyset$,  
\[
P\left(\{z\in \Z_k| V(T,z)=V(K,z)=V(C,z)\}\right)=0.
\]
Then, the argmin of $Q(t,z)$ over $t\in T$ is unique almost surely. 
\end{lemma}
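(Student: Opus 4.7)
The plan is to bound the event that the argmin is non-unique by a countable union of the probability-zero events appearing in the hypothesis, and then to let $k\to\infty$.

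First I would make precise the separating property of $\K_j$. Each $T_j$ is a second-countable Hausdorff manifold and therefore locally compact Hausdorff, so it admits a countable basis of relatively compact open sets whose closures separate points: for any two distinct $t,s\in T_j$ there are disjoint compact neighborhoods $K,C$ from this basis with $t\in K$ and $s\in C$. Take $\K_j$ to be this countable family. The paper already asserts that such a countable collection exists, and this is exactly what is meant by ``separate points of $T_j$.''

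Next, suppose that for some $z\in\Z$ the argmin of $Q(\cdot,z)$ over $T$ contains two distinct points $t\neq s$. Pick $j_1,j_2\in J$ with $t\in T_{j_1}$ and $s\in T_{j_2}$. If $j_1=j_2$, the separating property supplies disjoint $K\in\K_{j_1}$ with $t\in K$ and $C\in\K_{j_2}$ with $s\in C$. If $j_1\neq j_2$, the manifolds $T_{j_1}$ and $T_{j_2}$ are already disjoint, so any $K\in\K_{j_1}$ containing $t$ and any $C\in\K_{j_2}$ containing $s$ are automatically disjoint. Because $t$ and $s$ are both global minimizers,
\[
V(T,z)=Q(t,z)=V(K,z) \quad\text{and}\quad V(T,z)=Q(s,z)=V(C,z).
\]
Let $A$ denote the event that the argmin of $Q(\cdot,z)$ over $T$ contains at least two points, and for each admissible tuple let $E_{j_1,j_2,K,C}=\{z\in\Z:V(T,z)=V(K,z)=V(C,z)\}$. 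The previous paragraph shows $A\subset \bigcup E_{j_1,j_2,K,C}$, where the union ranges over $j_1,j_2\in J$ and disjoint $K\in\K_{j_1}$, $C\in\K_{j_2}$. Since $J$ and each $\K_j$ are countable, this is a countable union.

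Finally, I would intersect with $\Z_k$ and apply countable subadditivity. The hypothesis yields $P(E_{j_1,j_2,K,C}\cap\Z_k)=0$ for every term, so $P(A\cap\Z_k)=0$. Therefore $P(A)\le P(A\cap\Z_k)+P(\Z\setminus\Z_k)=P(\Z\setminus\Z_k)\to 0$ as $k\to\infty$, giving $P(A)=0$. The only step that requires real care is the topological bookkeeping in the first paragraph: constructing a countable family of compact neighborhoods in each $T_j$ that simultaneously covers every point and separates distinct pairs. Once that is in place, the remainder is a routine countable-union argument.
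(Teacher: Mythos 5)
Your proposal is correct and follows essentially the same route as the paper: contain the non-uniqueness event in the countable union of events $\{V(T,z)=V(K,z)=V(C,z)\}$ over disjoint pairs from the countable separating families, apply the hypothesis with countable subadditivity on $\Z_k$, and let $k\to\infty$. The only difference is that you spell out the construction of the separating compact neighborhoods (via second countability and local compactness), which the paper simply asserts before stating the lemma.
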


The condition in Lemma 2 is still not local. Lemma 3 reduces this condition to a local condition by finding neighborhoods of $t$, $s$, and $z$ that can be used to cover these compact sets such that an appropriate probability is zero. 

\begin{lemma}
Fix $\Z_k\subset \Z$, compact, fix $j_1,j_2\in J$, and fix $K\in\mathcal{K}_{j_1}$ and $C\in\mathcal{K}_{j_2}$ such that $K\cap C=\emptyset$. Suppose, for every $\bar z\in\Z_k$ and for every $(t,s)\in K\times C$, there exist neighborhoods, $N_{t, s,\bar z}, M_{t,s,\bar z}$, and $W_{t,s,\bar z}$ of $t, s$, and $\bar z$, respectively, such that 
\begin{align}
P\left(\{z\in W_{t,s,\bar z}| \right.V(T,z)=V(K,z)&=V(N_{t,s,\bar z},z)\nonumber\\
&\left.=V(C,z)=V(M_{t,s,\bar z},z)\}\right)=0.
\end{align}
Then, 
\[
P\left(\{z\in\Z_k| V(T,z)=V(K,z)=V(C,z)\}\right)=0.
\]
\end{lemma}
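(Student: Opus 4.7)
The plan is a classical compactness argument: the hypothesis gives a local (pointwise) null-measure statement on an open neighborhood around each $(t,s,\bar z)\in K\times C\times\Z_k$, so we want to pass from this to a null-measure statement on the whole slice by extracting a finite subcover and taking a finite union.

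First I would form the open cover $\{N_{t,s,\bar z}\times M_{t,s,\bar z}\times W_{t,s,\bar z} : (t,s,\bar z)\in K\times C\times\Z_k\}$ of $K\times C\times\Z_k$. Since $K$, $C$, and $\Z_k$ are compact, the product $K\times C\times\Z_k$ is compact, so this cover admits a finite subcover $\{N_i\times M_i\times W_i\}_{i=1}^n$, where each triple comes from some base point $(t_i,s_i,\bar z_i)$. By hypothesis, each set
\[
E_i=\{z\in W_i : V(T,z)=V(K,z)=V(N_i,z)=V(C,z)=V(M_i,z)\}
\]
satisfies $P(E_i)=0$.

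The main content is then to show that the set in the conclusion,
\[
E=\{z\in\Z_k : V(T,z)=V(K,z)=V(C,z)\},
\]
is contained in $\bigcup_{i=1}^n E_i$. Fix $z\in E$. Because $K$ and $C$ are compact and $Q(\cdot,z)$ is continuous (Assumption Continuous Differentiability (a)), the infima defining $V(K,z)$ and $V(C,z)$ are attained: choose $t_z\in K$ and $s_z\in C$ with $Q(t_z,z)=V(K,z)$ and $Q(s_z,z)=V(C,z)$. Since $(t_z,s_z,z)\in K\times C\times\Z_k$, it lies in some $N_i\times M_i\times W_i$. Now apply the sandwich
\[
V(T,z)\le V(N_i,z)\le Q(t_z,z)=V(K,z)=V(T,z),
\]
the first inequality holding because $N_i\subset T$ and the second because $t_z\in N_i$. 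This forces $V(N_i,z)=V(T,z)=V(K,z)$; the identical argument with $s_z\in M_i\cap C$ gives $V(M_i,z)=V(T,z)=V(C,z)$. Hence $z\in E_i$, so $E\subset\bigcup_{i=1}^n E_i$ and $P(E)\le\sum_{i=1}^n P(E_i)=0$.

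The only point that requires any care is the sandwich step; everything else is standard. The trap to avoid is assuming $N_i\subset K$ (it need not be, since $N_{t,s,\bar z}$ is a neighborhood of $t$ in $T_{j_1}$, not in $K$). The argument skirts this by using instead $N_i\subset T$ together with $t_z\in N_i\cap K$, together with the assumption $V(T,z)=V(K,z)$, to pin down $V(N_i,z)$ between two equal quantities. Measurability of $E$ is not separately needed: $E$ is sandwiched inside a finite union of sets already known to be measurable and null, giving outer-measure zero, hence $P(E)=0$.
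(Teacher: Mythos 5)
Your proof is correct and takes essentially the same route as the paper's: cover the compact product $K\times C\times\Z_k$ by the hypothesized neighborhoods, extract a finite subcover, show the event $\{V(T,z)=V(K,z)=V(C,z)\}\cap\Z_k$ is contained in the finite union of the null sets via attainment of the infima on $K$ and $C$, and conclude by finite subadditivity. Your explicit sandwich $V(T,z)\le V(N_i,z)\le Q(t_z,z)=V(K,z)=V(T,z)$ is exactly the step the paper uses (stated there a bit more tersely), so no further comparison is needed.
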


\subsection{The Local Problem}

For the rest of this section, fix $j_1,j_2\in J$, $K\in \mathcal{K}_{j_1}$, $C\in\mathcal{K}_{j_2}$ such that $K\cap C=\emptyset$, $t\in K$, $s\in C$, and $\bar{z}\in \Z_k$. We state lemmas that are useful for showing the existence of neighborhoods, $N, M$, and $W$, that satisfy (4.1), using properties of $Q(t,\bar z)$ that follow from Assumption Generic. Assumption Generic implies one of three conditions: 
\begin{enumerate}
\item $Q(t,\bar z)\neq Q(s,\bar z)$, 
\item there exists a $\Delta\in A(t)$ such that $\frac{d}{dh}Q(t+h\Delta,\bar z)|_{h=0}<0$, 
and symmetrically for $s$, and 
\item $\frac{d}{dz}Q(t,\bar z)\neq\frac{d}{dz}Q(s,\bar z)$. 
\end{enumerate}
Lemmas 4, 5, and 6, below, show the existence of neighborhoods that satisfy (4.1) for each of these cases, respectively. 

Lemma 4 follows from the intuitive notion that if the value of $Q(t,\bar z)$ is far from the value of $Q(s,\bar z)$, then the value of $V(N,z)$ is far from the value of $V(M,z)$, for small enough neighborhoods of $t$, $s$, and $\bar z$. 
\begin{lemma}
If $Q(t,\bar z)\neq Q(s,\bar z)$, then there exist neighborhoods, \hspace{-0.2mm}$N$\hspace{-0.4mm}, \hspace{-0.2mm}$M$\hspace{-0.4mm},\hspace{-0.3mm} and $W$\hspace{-0.4mm}, so that 
\[
\{z\in W | V(N,z)=V(M,z)\}=\emptyset.
\]
\end{lemma}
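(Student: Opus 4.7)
The plan is to prove Lemma 4 by a direct continuity argument that uses only Assumption Continuous Differentiability(a), namely joint continuity of $Q(t,z)$ in $t$ and $z$. The idea is to convert the strict inequality $Q(t,\bar z)\neq Q(s,\bar z)$ at the single point $(t,s,\bar z)$ into a uniform gap between $V(N,z)$ and $V(M,z)$ for every $z$ in a small neighborhood of $\bar z$.

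Without loss of generality assume $Q(t,\bar z)<Q(s,\bar z)$, and set $\delta=(Q(s,\bar z)-Q(t,\bar z))/3>0$. First, for the upper bound on $V(N,z)$, I would use continuity of $z\mapsto Q(t,z)$ at $\bar z$ to pick a neighborhood $W_1$ of $\bar z$ on which $Q(t,z)<Q(t,\bar z)+\delta$; then for any neighborhood $N$ of $t$ in $T_{j_1}$, the trivial bound $V(N,z)\le Q(t,z)$ (using $t\in N$) gives $V(N,z)<Q(t,\bar z)+\delta$ on $W_1$. Second, for the lower bound on $V(M,z)$, I would use joint continuity of $Q$ at $(s,\bar z)$ to pick neighborhoods $M$ of $s$ in $T_{j_2}$ and $W_2$ of $\bar z$ in $\Z$ such that $Q(s',z)>Q(s,\bar z)-\delta$ for every $(s',z)\in M\times W_2$; taking the infimum over $s'\in M$ yields $V(M,z)\ge Q(s,\bar z)-\delta$ on $W_2$. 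Setting $W=W_1\cap W_2$, these two bounds give, for every $z\in W$,
\[
V(N,z)<Q(t,\bar z)+\delta=Q(s,\bar z)-2\delta<V(M,z),
\]
so the set $\{z\in W \,:\, V(N,z)=V(M,z)\}$ is empty, as required.

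I do not anticipate any real obstacle here. The only point worth checking is that joint continuity of $Q$ at the point $(s,\bar z)$ indeed produces a product neighborhood $M\times W_2$ on which a uniform bound holds; this is standard in the metric topology inherited from the manifold structure on $T_{j_2}$ and from $\Z\subset\R^{d_z}$. The asymmetry between how $N$ and $M$ are treated (pointwise continuity at $t$ but uniform control across $M$) is what makes the strict-inequality bookkeeping succeed, and it is also why the normalization $\delta=(Q(s,\bar z)-Q(t,\bar z))/3$ (rather than just $/2$) is convenient, leaving a spare $\delta$ between the upper bound on $V(N,z)$ and the lower bound on $V(M,z)$.
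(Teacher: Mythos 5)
Your proposal is correct and takes essentially the same route as the paper: both proofs convert the gap $Q(t,\bar z)\neq Q(s,\bar z)$ into a uniform separation of the two value functions on small neighborhoods of $t$, $s$, and $\bar z$ using (joint) continuity of $Q$. Your one-sided bookkeeping, bounding $V(N,z)$ above by $Q(t,z)$ and $V(M,z)$ below uniformly on $M\times W_2$, even avoids the paper's step of taking minimizers in the closures of $N$ and $M$, but the underlying argument is the same.
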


Lemma 5 uses the first order conditions for optimality of $t$ or $s$. 
It uses the intuitive notion that if $t$ is not a relative minimum of $Q(t,\bar z)$, then it is not a global minimum of $Q(t,\bar z)$ over $T$, and it can be bounded away from the minimum in a neighborhood of $\bar z$. 
\begin{lemma}
If there exists a $\Delta\in A(t)$ such that 
\[
\frac{d}{d h}\left.Q(t+h \Delta,\bar z)\right|_{h=0}<0,
\]
then there exist neighborhoods of $t$ and $\bar z$, $N$ and $W$\hspace{-0.5mm}, such that 
\[
\{z\in W| V(N,z)=V(T,z)\}=\emptyset.
\] 
\end{lemma}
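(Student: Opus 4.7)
The plan is to translate the negative directional derivative at $t$ into the existence of a nearby point in $T$ at which $Q(\cdot,\bar z)$ is strictly smaller than $Q(t,\bar z)$, and then propagate this strict inequality to neighborhoods using continuity.

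First, I would unwind the definition of $A(t)\subset$ tangent cone of $t$ to $T_{j_1}$: $\Delta$ is the initial derivative of some differentiable curve $\gamma:[0,\delta)\to T_{j_1}$ with $\gamma(0)=t$. By Assumption Continuous Differentiability(c), the directional derivative $\frac{d}{dh}Q(t+h\Delta,\bar z)|_{h=0}$ can equivalently be expressed via the chain rule as $\frac{d}{dh}Q(\gamma(h),\bar z)|_{h=0}$, which is strictly negative by hypothesis. Hence for some small $h_0>0$, the point $t^*:=\gamma(h_0)$ lies in $T_{j_1}\subseteq T$ and satisfies $Q(t^*,\bar z)<Q(t,\bar z)$. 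Set $\varepsilon:=Q(t,\bar z)-Q(t^*,\bar z)>0$.

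Next, I would use the continuity of $Q$ provided by Assumption Continuous Differentiability(a). By joint continuity of $Q$ in $(t,z)$ on $T_{j_1}\times\Z$, there exists a neighborhood $N$ of $t$ in $T_{j_1}$ and a neighborhood $W_1$ of $\bar z$ such that $Q(\tilde t,z)>Q(t,\bar z)-\varepsilon/3$ for all $(\tilde t,z)\in N\times W_1$. Similarly, by continuity at $(t^*,\bar z)$, there is a neighborhood $W_2$ of $\bar z$ such that $Q(t^*,z)<Q(t^*,\bar z)+\varepsilon/3$ for all $z\in W_2$. Take $W:=W_1\cap W_2$.

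Finally, I would combine these bounds: for every $z\in W$ and every $\tilde t\in N$,
\[
Q(\tilde t,z) > Q(t,\bar z)-\varepsilon/3 = Q(t^*,\bar z)+2\varepsilon/3 > Q(t^*,z)+\varepsilon/3 \ge V(T,z)+\varepsilon/3,
\]
where the last inequality uses $t^*\in T$. Taking the infimum over $\tilde t\in N$ gives $V(N,z)\ge V(T,z)+\varepsilon/3>V(T,z)$, so $\{z\in W:V(N,z)=V(T,z)\}=\emptyset$, as desired.

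The only genuinely delicate step is the first one: correctly interpreting the directional derivative at a potential boundary or corner point of $T_{j_1}$ and guaranteeing that the curve realizing $\Delta$ stays inside $T$. Once the tangent-cone definition furnishes an in-manifold curve $\gamma$, the rest of the proof is a routine $\varepsilon/3$ continuity argument, and there is no further obstacle.
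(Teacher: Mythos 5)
Your proof is correct and follows essentially the same route as the paper's: use the negative directional derivative to produce a point $t^*\in T$ with $Q(t^*,\bar z)<Q(t,\bar z)-\varepsilon$, then use joint continuity to bound $Q(\tilde t,z)$ away from $Q(t^*,z)\ge V(T,z)$ uniformly over neighborhoods $N$ and $W$. Your version merely makes explicit (via the curve realizing $\Delta$ and the $\varepsilon/3$ splitting) steps the paper compresses into one continuity appeal.
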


Lemma 6 is a novel contribution allowing a simple proof of Lemma 1. 
It relies on a type of mean value bound for secants of the value function. 
The intuition is if, in some $z$-direction, the derivative of $Q(t,\bar z)$ is always less than the derivative of $Q(s,\bar z)$, then any secants of $V(N,\bar z)$ and $V(M,\bar z)$ share that property, for sufficiently small neighborhoods, $N$ and $M$. 
Thus, $V(N,\bar z)$ is always increasing or decreasing at a rate which is less than the rate at which $V(M,\bar z)$ is increasing or decreasing. 
This implies that they cannot cross more than once, and the set of crossing points must have probability zero. 
\begin{lemma}
If 
\[
\frac{d}{dz}Q(t,\bar z)\neq \frac{d}{dz}Q(s,\bar z),
\]
then there exist neighborhoods, $N$, $M$, and $W$, so that 
\[
P(\{z\in W| V(N,z)=V(M,z)\})=0.
\]
\end{lemma}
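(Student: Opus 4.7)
The hypothesis supplies a direction $v \in \mathbb{R}^{d_z}$ with $v'\tfrac{d}{dz}Q(t,\bar z) < v'\tfrac{d}{dz}Q(s,\bar z)$ (flip the sign of $v$ if necessary). The plan is to show that on a small tube around $\bar z$ the difference $F(z) := V(N,z) - V(M,z)$ is strictly decreasing along $v$. Any such $F$ vanishes on at most a graph over the hyperplane $v^\perp$, which has $d_z$-dimensional Lebesgue measure zero, and the conclusion then follows from Assumption Absolute Continuity applied to $\{V(N,z) = V(M,z)\} \cap W = \{F=0\} \cap W$.

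\textbf{Key steps.} Let $2\epsilon = v'\tfrac{d}{dz}Q(s,\bar z) - v'\tfrac{d}{dz}Q(t,\bar z) > 0$. By continuity of $\tfrac{d}{dz}Q(\cdot,\cdot)$ (Assumption Continuous Differentiability(b)), choose compact neighborhoods $N$ of $t$ and $M$ of $s$, and a convex open neighborhood $W$ of $\bar z$, such that
\[
v'\tfrac{d}{dz}Q(t',z') < v'\tfrac{d}{dz}Q(t,\bar z) + \epsilon/2 \text{ on } N \times W, \quad v'\tfrac{d}{dz}Q(s',z') > v'\tfrac{d}{dz}Q(s,\bar z) - \epsilon/2 \text{ on } M \times W.
\]
Fix $z \in W$ and $h > 0$ with $z + hv \in W$. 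Pick a minimizer $\hat t(z)$ of $Q(\cdot,z)$ on $N$ (it exists since $N$ is compact and $Q$ is continuous in $t$) and a minimizer $\hat s(z+hv)$ of $Q(\cdot, z+hv)$ on $M$. Since $V(N,\cdot)$ is a pointwise infimum, the mean value theorem applied to $\tau \mapsto Q(\hat t(z), z + \tau h v)$ along the segment in $W$ yields
\[
V(N, z+hv) - V(N, z) \le Q(\hat t(z), z+hv) - Q(\hat t(z), z) < h\bigl(v'\tfrac{d}{dz}Q(t,\bar z) + \epsilon/2\bigr),
\]
while the symmetric argument using $\hat s(z+hv)$ gives
\[
V(M, z+hv) - V(M, z) \ge Q(\hat s(z+hv), z+hv) - Q(\hat s(z+hv), z) > h\bigl(v'\tfrac{d}{dz}Q(s,\bar z) - \epsilon/2\bigr).
\]
Subtracting, $F(z+hv) - F(z) < -h\epsilon$, so $F$ is strictly decreasing along $v$. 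Hence for each $z_\perp$ in the projection of $W$ onto $v^\perp$, the slice $h \mapsto F(z_\perp + hv)$ has at most one zero, and $\{F=0\} \cap W$ is contained in a graph over an open subset of $v^\perp$, whose Lebesgue measure is zero. Absolute continuity of $P$ completes the argument.

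\textbf{Main obstacle.} The delicate point is the two-sided secant bound, because $V(N,\cdot)$ and $V(M,\cdot)$ need not be differentiable in $z$. The trick is to use complementary choices of minimizer at the two endpoints: for the upper bound on the $V(N)$-secant, anchor the minimizer at $z$ (so that $V(N,z+hv) \le Q(\hat t(z), z+hv)$ and equality holds at $z$); for the lower bound on the $V(M)$-secant, anchor at $z+hv$ (so that $V(M,z) \le Q(\hat s(z+hv), z)$ and equality holds at $z+hv$). Once this asymmetric bookkeeping is set up, the strict gap in $\tfrac{d}{dz}Q$ transfers to the secants of $F$ without requiring any smoothness of the value functions beyond continuity.
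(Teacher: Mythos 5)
Your proposal is correct and follows essentially the same route as the paper's proof of this lemma: fix a direction along which the $z$-derivatives at $t$ and $s$ differ, use continuity of $\frac{d}{dz}Q$ to make the gap uniform over neighborhoods $N\times W$ and $M\times W$, anchor the near-minimizer for $N$ at the left endpoint and for $M$ at the right endpoint to get a strict secant bound on $V(N,\cdot)-V(M,\cdot)$, and conclude at most one zero per line parallel to the chosen direction, so the crossing set has Lebesgue measure zero and hence probability zero by Assumption Absolute Continuity (via Tonelli). The only differences are cosmetic: you take compact neighborhoods with attained minimizers and apply the mean value theorem to each value function separately, while the paper works with minimizing sequences, a limsup, and the mean value theorem applied to the difference.
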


Lemmas 4-6 are sufficient to find neighborhoods that satisfy (4.1). We now use them to prove Lemma 1. 

\subsection{Proof of Lemma 1}

For any $P$ over $\R^{d_z}$, there exists a sequence of compact sets, $\Z_k\subset \Z$, such that $P(\Z_k)\rightarrow 1$ as $k\rightarrow\infty$. Fix $j_1, j_2\in J$, $K\in\mathcal{K}_{j_1}$, and $C\in\mathcal{K}_{j_2}$ such that $K\cap C=\emptyset$. We seek to verify the conditions of Lemma 3. Fix $t\in K$, $s\in C$, and let $\bar z\in\Z_k$. 
We divide into cases: 
\begin{enumerate}
\item $Q(t,\bar z)\neq Q(s,\bar z)$. In this case, the existence of neighborhoods, $N$, $M$, and $W$, satisfying (4.1) follows from Lemma 4. 
\item There exists a $\Delta\in A(t)$ such that $\frac{d}{dh} Q(t+h\Delta,\bar z)|_{h=0}<0$ or $\frac{d}{dh} Q(s+h\Delta,\bar z)|_{h=0}<0$. In both cases, the existence of neighborhoods satisfying equation (4.1) follows from Lemma 5. 
\item $\frac{d}{d z}Q(t,\bar z)\neq \frac{d}{d z}Q(s,\bar z)$. In this case, the existence of neighborhoods, $N$, $M$, and $W$, satisfying (4.1) follows from Lemma 6.
\end{enumerate}

The above cases exhaust the possibilities. Thus, for every $(t,s)\in K\times C$, the condition of Lemma 3 is satisfied for $K$ and $C$. Since $K$ and $C$ are arbitrary, this verifies the condition of Lemma 2. Therefore, by Lemma 2, the argmin of $Q(t,z)$ over $t\in T$ is unique almost surely-$z$. \qed

\section{Conclusion}

This paper establishes the argmin of a random objective function to be unique almost surely. 
This paper first formulates a general result, Lemma 1, that proves uniqueness without convexity of the objective function. 
This paper applies the result to prove uniqueness in a variety of statistical applications. 
In M-estimation, uniqueness of the argmin is established for the first time in finite mixture models and penalized linear regression with nonconvex penalty. 
In the argmin theorem, uniqueness of the argmin of the limiting stochastic process is established for the first time in two cases: p-value based threshold estimation and the profiled objective function for weakly identified parameters. 

\appendix

\section{Proofs of Lemmas 2-6}

\begin{proof}[\bf{Proof of Lemma 2}]

Let $z\in \Z_k$ for some $k$. Suppose $Q(t,z)$ is not uniquely minimized over $t\in T$. Then, there exist $j_1, j_2\in J$, $t\in T_{j_1}$, and $s\in T_{j_2}$, $t\neq s$, so that $Q(t,z)=Q(s,z)=\inf_{t\in T} Q(t,z).$ Furthermore there exist sets, $K\in\mathcal{K}_{j_1}$ and $C\in \mathcal{K}_{j_2}$, such that $K\cap C=\emptyset$, $t\in K$, and $s\in C$. It follows that $V(T,z)=V(K,z)=Q(t,z)=V(C,z)=Q(s,z).$ This implies that for every $k$, 
\begin{eqnarray*}
&&\{z \in \Z_k|Q(t,z) \text{ is not uniquely minimized over } t\in T\}\\
&\subset&\hspace{-3mm}\underset{j_1,j_2\in J}{\cup}\underset{\underset{K\cap C=\emptyset}{K\in\mathcal{K}_{j_1}, C\in\mathcal{K}_{j_2}}}{\cup}\left\{z\in \Z_k|V(T,z)=V(K,z)=V(C,z)\right\}. 
\end{eqnarray*}
This implies, by countability of $J$ and $\mathcal{K}_j$, that
\begin{eqnarray*}
&&P(\{z\in \Z|Q(t,z) \text{ is not uniquely minimized over } t\in T\})\\
&\le&P(\Z_k^c)+P(\{z\in \Z_k|Q(t,z) \text{ is not uniquely minimized over } t\in T\})\\
&\le&P(\Z_k^c)+\sum_{j_1, j_2\in J}\sum_{\underset{K\cap C=\emptyset}{K\in\mathcal{K}_{j_1}, C\in\mathcal{K}_{j_2}}}P\left(\left\{z\in \Z_k |V(T,z)=V(K,z)=V(C,z)\right\}\right)\\
&=&P(\Z_k^c)\rightarrow 0,
\end{eqnarray*}
where $\Z_k^c$ denotes the complement of $\Z_k$ in $\Z$, the equality follows by assumption, and the convergence follows as $k\rightarrow\infty$ by the assumption on $\Z_k$. 
\end{proof}

\begin{proof}[\bf{Proof of Lemma 3}]

Notice that $\{N_{t,s,\bar z}\times M_{t,s,\bar z}\times W_{t,s,\bar z}|\bar z\in\Z_k, t\in K,\text{ and }s\in C\}$ is an open cover of $\Z_k\times K\times C$. Thus, there is a finite subcover that we index by $\{(t_m, s_m, \bar z_m)\}_{m=1}^{M}.$ Then, 
\begin{eqnarray}
&&P\left(\{z\in\Z_k|V(T,z)=V(K,z)=V(C,z)\}\right)\nonumber\\
&\le&\sum_{m=1}^{M}P\left(\{z\in W_{t_m,s_m,\bar z_m}|V(T,z)=V(N_{t_m,s_m,\bar z_m},z)\right.\\
&&\hspace{4cm}\left.=V(M_{t_m,s_m,\bar z_m},z)=V(K,z)=V(C,z)\}\right)\nonumber\\
&=&0,
\end{eqnarray}
where (A.1) follows from the following argument. Let $z\in \Z_k$. By compactness there exist, $\tilde t\in K$ and $\tilde s\in C$ such that $Q(\tilde t,z)=\inf_{t\in K}Q(t,z)$ and $Q(\tilde s,z)=\inf_{t\in C}Q(t,z)$. Using the open cover, there exists an $m\in\{1,...,M\}$ such that $N_{t_m,s_m,\bar z_m}\times M_{t_m,s_m,\bar z_m}\times W_{t_m,s_m,\bar z_m}$ contains $(\tilde{t},\tilde{s},z)$. This implies that $V(K,z)=V(N_{t_m,s_m,\bar z_m},z)$ and  $V(C,z)=V(M_{t_m,s_m,\bar z_m},z)$. 

Equation (A.2) follows by assumption. 
\end{proof}

\begin{proof}[\bf{Proof of Lemma 4}]

There exists an $\epsilon>0$ such that $|Q(t,\bar z)-Q(s,\bar z)|>\epsilon.$ By continuity of $Q(t,z)$, there exist $W$, $N$, and $M$, bounded neighborhoods of $\bar{z}$, $t$, and $s$, respectively, such that 
\begin{equation}
\inf_{\tilde t\in N}\inf_{\tilde s\in M}\inf_{z\in W}|Q(\tilde t,z)-Q(\tilde s,z)|>\epsilon.
\end{equation}
These neighborhoods satisfy Lemma 4. To see this, fix $z\in W$ and let $\bar t\in cl(N)$ and $\bar s\in cl(M)$ satisfy $Q(\bar t,z)=V(N,z)$ and $Q(\bar s,z)=V(M,z)$, where $cl(\cdot)$ denotes closure. Then, by continuity of $Q(t,z)$, $|Q(\bar t,z)-Q(\bar s,z)|\ge\epsilon$, which implies that $V(N,z)\neq V(M,z)$. 
\end{proof}

\begin{proof}[\bf{Proof of Lemma 5}]

Since $\frac{d}{dh}\left.Q(t+h\Delta,\bar z)\right|_{h=0}<0$, there exists an $\epsilon>0$ and a $\bar t\in T$ such that $Q(\bar t,\bar z)<Q(t,\bar z)-\epsilon.$ By the continuity of $Q(t,\bar z)$, there exist neighborhoods of $\bar z$ and $t$, $W$ and $N$, respectively, such that for all $t^\dagger\in N$ and for all $z\in W$, $Q(\bar t,z)<Q(t^\dagger,z)-\epsilon.$ This implies that for every $z\in W$, $V(N,z)\ge Q(\bar t,z)+\epsilon > V(T,z),$ where the second inequality follows from the fact that $\epsilon>0$. This shows that $\{z\in W | V(N,z)= V(T,z)\}=\emptyset$. 
\end{proof}

\begin{proof}[\bf{Proof of Lemma 6}]

There exists a $\lambda\in\R^{d_z}$ and an $\epsilon>0$ such that 
\[
\frac{d}{dh}\left(Q(t,\bar z+h\lambda)-Q(s,\bar z+h\lambda)\right)|_{h=0}<-\epsilon. 
\]
By Assumption Continuous Differentiability, there exists a convex neighborhood of $\bar{z}$, $W$, and neighborhoods $N$ and $M$ of $t$ and $s$, respectively, such that 
\begin{equation}
\sup_{z\in W}\frac{d}{dh}\left.\left(Q(\tilde t,z+h\lambda)-Q(\tilde s,z+h\lambda)\right)\right|_{h=0}<-\epsilon
\end{equation}
for all $\tilde t\in N$ and $\tilde s\in M$. 

Without loss of generality, we rotate $z$ so that $\lambda=e_1$, the first unit vector. Then, 
\begin{eqnarray*}
&&P(\{z\in W|V(N,z)=V(M,z)\})\\
&=&\int_{W} \mathds{1}\{V(N,z)=V(M,z)\} p(z)dz\\
&=&\int_{\R^{d_z-1}}\int_{\R} \mathds{1}\{V(N,(z_1,z_2))=V(M,(z_1,z_2))\}\mathds{1}\{(z_1,z_2)\in W\}p(z)dz_1dz_2,
\end{eqnarray*}
where the first equality uses Assumption Absolute Continuity with $p(z)$, the density of $P$, and the second equality uses Tonelli's theorem. Thus, it is sufficient to show that for every $z_2$ fixed, the set of all $z_1\in\R$ such that $V(N,z_1,z_2)=V(M,z_1,z_2)$ and $(z_1,z_2)\in W$ is finite. 

We show that the number of such $z_1$ is at most one. Suppose there exist two, $z'_1>z''_1$. Let $z'=(z'_1,z_2)$ and $z''=(z''_1,z_2)$. Then, 
\begin{eqnarray*}
0&\hspace{-0.5mm}=&\hspace{-0.5mm}\frac{V(N,z')-V(N,z'')}{z'_1-z''_1}-\frac{V(M,z')-V(M,z'')}{z'_1-z''_1}\\
&\hspace{-0.5mm}=&\hspace{-0.5mm}\lim_{m\rightarrow\infty} \frac{Q(t_m^* (z') ,z') - Q(t_m^* (z'') ,z'')}{z'_1-z''_1} - \frac{Q(s_m^* (z') ,z') - Q(s_m^* (z'') ,z'')}{z'_1-z''_1}\\
&\hspace{-0.5mm}=&\hspace{-0.5mm}\lim_{m\rightarrow\infty}\frac{Q(t_m^* (z') ,z') - Q(s_m^* (z') ,z') - \left(Q(t_m^* (z'') ,z'') - Q(s_m^* (z'') ,z'') \right)}{z'_1-z''_1}\\
&\hspace{-0.5mm}\le&\hspace{-0.5mm}\overset{\text{limsup}}{\scriptstyle{m\rightarrow\infty}}\hspace{2mm}\frac{Q(t_m^* (z'') ,z') - Q(s_m^* (z') ,z') - \left(Q(t_m^* (z'') ,z'') - Q(s_m^* (z') ,z'') \right)}{z'_1-z''_1}\\
&\hspace{-0.5mm}=&\hspace{-0.5mm}\overset{\text{limsup}}{\scriptstyle{m\rightarrow\infty}}\hspace{2mm}\left.\frac{d}{dz_1}\left(Q(t_m^*(z''),(z_1,z_2))-Q(s_m^*(z'),(z_1,z_2))\right)\right|_{z_1=\tilde z_{1m}}\\
&\hspace{-0.5mm}\le&\hspace{-0.5mm}-\epsilon<0,
\end{eqnarray*}
where the second equality follows by letting $t_m^*(z)$ be a sequence in $N$ for which $Q(t_m^*(z),z)$ $\rightarrow V(N,z)$ as $m\rightarrow\infty$, and similarly for $s_m^*(z)$. The third equality follows by rearranging terms. The first inequality follows because $\lim_{m\rightarrow\infty}Q(t_m^*(z'),z')\le \underset{m\rightarrow\infty}{\text{limsup}}\hspace{1.5mm} Q(t_m^*(z''),z')$ and $\lim_{m\rightarrow\infty}Q(s_m^*(z''),z'')\le \underset{m\rightarrow\infty}{\text{limsup}}\hspace{1.5mm} Q(s_m^*(z'),z'')$. The final equality follows by the mean value theorem for some $\tilde z_{1m}$ between $z''_1$ and $z'_1$. 

The second inequality follows from (A.4) because $t_m^*(z')\in N$, $s_m^*(z'')\in M$, and $(\tilde z_{1m},z_2)\in W$, by convexity. This is a contradiction. Therefore, the set of all $z_1\in\mathbb{R}$ such that $V(N,z_1,z_2)=V(M,z_1,z_2)$ and $(z_1,z_2)\in W$ contains at most one point. Therefore, $P(\{z\in W|V(N,z)=V(M,z)\})=0$. 
\end{proof}

\section{Proof of Theorems and Corollary 1}

\begin{proof}[\bf{Proof of Theorem 1}]

For each $K=1,...,J$, let $T_K=\{(\tau,\mu): \tau\in\R^{K}, \mu\in\R^{K}, \tau_j>0, \sum_{j=1}^K\tau_j=1, \text{ and } \mu_1<\mu_2<...<\mu_K\}$. Notice that $Q(\tau,\mu,z)$ can be defined for any $(\mu,\tau)\in\cup_{K=1}^{J} T_K$ with $K$ in place of $J$. 

We verify the assumptions of Lemma 1. Assumption Manifold is satisfied because $T_J$ is a manifold in $\R^{2J}$ of dimension $2J-1$. Assumption Absolute Continuity is satisfied because $z_i$ is continuously distributed. We take $\mathcal{Z}$ to be the set of all $(z_1,...,z_n)\in\R^n$ that are distinct. This is permitted because any two draws of $z_i$ being equal occurs only on a set of measure zero. Assumption Continuous Differentiability is satisfied with $A(t)$ equal to the full tangent space because the objective function is continuously differentiable as a function of both $z$ and $(\tau,\mu)$. 

We verify a stronger version of Assumption Generic. For any $z\in\mathcal{Z}$ and for any $(\tau,\mu), (\varsigma,\nu)\in \cup_{K=1}^{J} T_K$, we show that if 
\begin{equation}
\frac{d}{dz_i}\left[Q(\tau,\mu,z_1,...,z_n)-Q(\varsigma,\nu,z_1,...,z_n)\right]=0
\end{equation}
for all $i=1,...,n$, then $\tau=\varsigma$ and $\mu=\nu$. By the contrapositive, if $\tau\neq \varsigma$ or $\mu\neq \nu$, then Condition (d) in Assumption Generic must hold for some $i=1,...,n$. If (B.1) holds for $\cup_{K=1}^J T_K$, then it also must hold for $T_J$. In that case, by Lemma 1, the argmin of $Q(\tau,\mu,z_1,...,z_n)$ is unique almost surely.

Let $(\tau,\mu)\in T_{K_1}$ and let $(\varsigma,\nu)\in T_{K_2}$. Suppose for every $i=1,...n$, the following holds: 
\begin{align*}
0&=\frac{d}{dz_i}[Q(\tau,\mu,z_1,...,z_n)-Q(\varsigma,\nu,z_1,...,z_n)]\\
&=\frac{1}{f(z_i;\tau,\mu)}\sum_{k=1}^{K_1} \tau_k (z_i-\mu_k)\phi(z_i-\mu_k)-\frac{1}{f(z_i;\varsigma,\nu)}\sum_{j=1}^{K_2} \varsigma_j (z_i-\nu_j)\phi(z_i-\nu_j).
\end{align*}
This holds if and only if 
\begin{align}
&&f(z_i;\varsigma,\nu)\sum_{k=1}^{K_1} \tau_k (z_i\hspace{-0.8mm}-\hspace{-0.8mm}\mu_k)\phi(z_i\hspace{-0.8mm}-\hspace{-0.8mm}\mu_k)&\hspace{-0.6mm}=\hspace{-0.6mm}f(z_i;\tau,\mu)\sum_{j=1}^{K_2} \varsigma_j (z_i\hspace{-0.8mm}-\hspace{-0.8mm}\nu_j)\phi(z_i\hspace{-0.8mm}-\hspace{-0.8mm}\nu_j)\nonumber\\
\iff&&\hspace{-2.8mm}\sum_{j=1}^{K_2} \varsigma_j \phi(z_i\hspace{-0.8mm}-\hspace{-0.8mm}\nu_j)\sum_{k=1}^{K_1} \tau_k (z_i\hspace{-0.8mm}-\hspace{-0.8mm}\mu_k)\phi(z_i\hspace{-0.8mm}-\hspace{-0.8mm}\mu_k)&\hspace{-0.6mm}=\hspace{-0.6mm}\sum_{k=1}^{K_1} \tau_k \phi(z_i\hspace{-0.8mm}-\hspace{-0.8mm}\mu_k)\sum_{j=1}^{K_2} \varsigma_j (z_i\hspace{-0.8mm}-\hspace{-0.8mm}\nu_j)\phi(z_i\hspace{-0.8mm}-\hspace{-0.8mm}\nu_j)\nonumber\\
\iff&&\hspace{-2.8mm}\sum_{j=1}^{K_2}\sum_{k=1}^{K_1} \varsigma_j  \tau_k\phi(z_i\hspace{-0.8mm}-\hspace{-0.8mm}\nu_j)\phi(z_i\hspace{-0.8mm}-\hspace{-0.8mm}\mu_k) (\mu_k\hspace{-0.8mm}-\hspace{-0.8mm}\nu_j)&\hspace{-0.6mm}=\hspace{-0.5mm}0.
\end{align}

Rewrite equation (B.2), for all $i=1,...,n$, as 
\begin{equation}
Ab=0,
\end{equation}
where $b$ is a $K_1 K_2\times 1$ vector, which is indexed by $j=1,...,K_1$ and $k=1,...,K_2$. The $(j,k)^{th}$ element of $b$ is given by: 
\[
\varsigma_j \tau_k (\mu_k-\nu_j)\exp\{-0.5(\nu_j^2+\mu_k^2)\}.
\]
$A$ is an $n\times K_1 K_2$ matrix, whose $\left(i, (j,k)\right)^{th}$ element is $\exp\{z_i(\nu_j+\mu_k)\}$, which uses 
\[
\phi(z_i-\nu_j)\phi(z_i-\mu_k)=(2\pi)^{-1}\exp\{-z_i^2\}\exp\{z_i(\nu_j+\mu_k)\}\exp\{-0.5(\nu_j^2+\mu_k^2)\}, 
\]
together with multiplying equation (B.2) by $2\pi \exp\{z_i^2\}$ to simplify. 

We would like to show that $A$ has full rank equal to $K_1K_2$, but this is not true because some of the columns of $A$ are redundant whenever there are pairs of indices, $(j_1,k_1)$ and $(j_2,k_2)$ such that $\nu_{j_1}+\mu_{k_1}=\nu_{j_2}+\mu_{k_2}$. To deal with this, we define equivalence classes of indices and shrink the matrix, $A$. 

Consider sets of indices, $(j,k)$, divided into equivalence classes according to the equivalence relation: 
\[
(j_1,k_1)\sim (j_2,k_2)\iff \nu_{j_1}+\mu_{k_1}=\nu_{j_2}+\mu_{k_2}.
\]
Let the set of all equivalence classes be $G$. For every $g\in G$, let $|g|$ denote the value of $\nu_j+\mu_k$ for $(j,k)\in g$. Suppose there are $M\le K_1 K_2$ equivalence classes, enumerated by $g_1,...,g_M$, in ascending order. Consider the following equation: 
\begin{equation}
\tilde A \tilde b =0,
\end{equation}
where $\tilde b$ is an $M$ vector, indexed by $g\in G$. The $g^{th}$ element of $\tilde b$ is given by: 
\[
\sum_{(j,k)\in g} \varsigma_j \tau_k (\mu_k-\nu_j)\exp\{-0.5(\nu_j^2+\mu_k^2)\}. 
\]
$\tilde A$ is a $n\times M$ matrix, whose $(i,g)^{th}$ element is: $\exp\{z_i|g|\}.$ We can see that equation (B.4) follows from equation (B.3) by eliminating those columns that are redundant. 

Next, we use the fact that $\tilde A$ is a strictly totally positive matrix (see \cite{Karlin1968}, 
Section 1.2). This implies that for distinct, $|g_1|,...,|g_M|$, and distinct $z_1,...,z_n$, $\tilde A$ has full rank $M$, using the fact that $n\ge J^2\ge K_1K_2\ge M$. Therefore, by inversion, we know that $\tilde b=0$ or, for every $g\in G$, 
\begin{equation}
\sum_{(j,k)\in g} \varsigma_j \tau_k(\mu_k-\nu_j) \exp\{-0.5(\nu_j^2+\mu_k^2)\}=0. 
\end{equation}
We use equation (B.5) to show the following Claim. \\

\noindent\textbf{Claim: } $K_1=K_2$ and for all $j,k=1,...,K_1$, $\mu_j=\nu_j$ and $\tau_k\varsigma_j=\tau_j\varsigma_k$. \\

\noindent\textbf{Proof of Claim: } The proof proceeds by induction on $j$ and $k$. 
\begin{enumerate}
\item Initialization Step: For $j=k=1$, $g_1=\{(1,1)\}$. This is because $\nu_j+\mu_k$ is minimized when both $\nu_j$ and $\mu_k$ are at their minimum values, $\nu_1$ and $\mu_1$. Applying equation (B.5) to $g_1$ gives
\[
\varsigma_1 \tau_1 (\mu_1-\nu_1)\exp\{-0.5(\nu_1^2+\mu_1^2)\}=0. 
\]
Since $\varsigma_1>0$ and $\tau_1>0$, this reduces to $\mu_1=\nu_1$. Also, $\tau_1\varsigma_1=\tau_1\varsigma_1$ holds trivially. 
\item Induction Step: For $j,k\le K$, assume that $\mu_j=\nu_j$ and $\tau_k\varsigma_j=\tau_j\varsigma_k$. We want to show the result for $j,k\le K+1$. 

Consider $\nu_{K+1}$ and $\mu_{K+1}$. Without loss of generality, suppose $\nu_{K+1}\le \mu_{K+1}$. Let $\bar g\in G$ be the smallest equivalence class that contains $K+1$ as an index. One can verify that $(K+1,1)\in \bar g$ and $|\bar g|=\nu_{K+1}+\mu_1$, since any other pair of indices containing $K+1$ must be larger than or equal to $\nu_{K+1}+\mu_1$. $\bar g$ may also contain three other types of indices: 
\begin{enumerate}
\item $(1,K+1)$ may belong to $\bar g$ depending on whether or not $\mu_{K+1}=\nu_{K+1}$. 
\item $(j,j)$ may belong to $\bar g$ for some $1\le j\le K$. 
\item The pair, $(j,k)$ and $(k,j)$, may both belong to $\bar g$, for $1\le j,k\le K$.  
\end{enumerate}
The proof that no other pairs of indices belongs to $\bar g$ follows from noticing the following facts. First, notice that if $j$ or $k$ is greater than $K+1$, then $\nu_j+\mu_k>\nu_{K+1}+\mu_1$. Second, notice that if either $j=K+1$ or $k=K+1$, then the other index must be equal to 1. Finally, notice that if $(j,k)\in \bar g$, where $j,k\le K$, then $(k,j)\in \bar g$ because, by the induction assumption, $\nu_j=\mu_j$ and $\nu_k=\mu_k$. 

Now, consider equation (B.5) applied to $\bar g$. We have that 
\begin{align}
0\hspace{1mm}&=\hspace{3.8mm}\varsigma_{K+1} \tau_1(\mu_1-\nu_{K+1}) \exp\{-0.5(\nu_{K+1}^2+\mu_1^2)\}\\
&\hspace{4mm}+\mathds{1}\{(1,K+1)\in\bar g\}\varsigma_{1} \tau_{K+1}(\mu_{K+1}-\nu_{1}) \exp\{-0.5(\nu_{1}^2+\mu_{K+1}^2)\}\nonumber\\
&\hspace{4mm}+\sum_{j=1}^K\mathds{1}\{(j,j)\in\bar g\}\varsigma_{j} \tau_{j}(\mu_{j}-\nu_{j}) \exp\{-0.5(\nu_{j}^2+\mu_{j}^2)\}\nonumber\\
&\hspace{4mm}+\sum_{j=2}^K\sum_{k=1}^{j-1}\mathds{1}\{(j,k)\in\bar g\}\left[\varsigma_{j} \tau_{k}(\mu_{k}-\nu_{j}) \exp\{-0.5(\nu_{j}^2+\mu_{k}^2)\}\right.\nonumber\\
&\hspace{4.5cm}\left.+\hspace{0.5mm}\varsigma_{k} \tau_{j}(\mu_{j}-\nu_{k}) \exp\{-0.5(\nu_{k}^2+\mu_{j}^2)\}\right]\hspace{-0.5mm}.\nonumber
\end{align}
The terms in the third line are zero because $\mu_j=\nu_j$ by the induction assumption. The terms in the fourth line are also zero because $\mu_k=\nu_k$, $\mu_j=\nu_j$, and $\varsigma_j\tau_k=\varsigma_k\tau_j$. Finally, there are two cases. If $(1,K+1)\notin \bar g$, then the equation 
\[
0=\varsigma_{K+1} \tau_1(\mu_1-\nu_{K+1}) \exp\{-0.5(\nu_{K+1}^2+\mu_1^2)\}
\]
implies a contradiction because $\varsigma_{K+1}>0$, $\tau_1>0$, $\nu_{K+1}>\nu_1=\mu_1$, and $\exp\{-0.5(\nu_{K+1}^2+\mu_1^2)\}>0$. Therefore, it must be the case that $(1,K+1)\in \bar g$. This can only happen if $\mu_{K+1}=\nu_{K+1}$. Further, in this case equation (B.5) reduces to: $\varsigma_{K+1}\tau_1=\varsigma_1\tau_{K+1}$. This proves the induction step because for any $j\le K$, we multiply the equation by $\frac{\tau_j}{\tau_1}=\frac{\varsigma_j}{\varsigma_1}$ to get: $\varsigma_{K+1}\tau_j=\varsigma_j\tau_{K+1}$.

\item The induction proceeds until $K=\min(K_1, K_2)$. We show that $K_1=K_2$. To reach a contradiction, suppose $K_1< K_2$. The case $K_2<K_1$ is treated symmetrically. Let $\bar g\in G$ be the smallest equivalence class that contains $K_1+1$ as an index. Then $(K_1+1,1)\in \bar g$ and $|\bar g|=\nu_{K_1+1}+\mu_1$. Following the argument of the induction step, equation (B.5) applied to $\bar g$ gives equation (B.6), except with the second line omitted. In this case, equation (B.6) becomes 
\[
0=\varsigma_{K_1+1} \tau_1(\mu_1-\nu_{K_1+1}) \exp\{-0.5(\nu_{K_1+1}^2+\mu_1^2)\},
\]
which is a contradiction because $\varsigma_{K_1+1}>0$, $\tau_1>0$, $\nu_{K_1+1}>\nu_1=\mu_1$, and $\exp\{-0.5(\nu_{K_1+1}^2+\mu_1^2)\}>0$. Therefore, it must be the case that $K_1=K_2$. 
\qed
\end{enumerate}

Finally, we show that $\varsigma=\tau$. From the Claim, we know that for any $j,k=1,...,K_1$, $\tau_k/\tau_j=\varsigma_k/\varsigma_j$. If $\tau_k>\varsigma_k$ for some $k$, then $\tau_j>\varsigma_j$ for any other $j$, to make the equation hold. Summing up, we get $1=\sum_{j=1}^{K_1} \tau_j>\sum_{j=1}^{K_1} \varsigma_j=1$, a contradiction. Therefore, $\varsigma_k=\tau_k$ for all $k$. This verifies (B.1). 
\end{proof}

\begin{proof}[\bf{Proof of Corollary 1}]

We apply Lemma 1 to $Q(\tau,\mu,z)$ defined over $T=\cup_{K=1}^J T_K$, where $T_K$ is defined at the beginning of the proof of Theorem 1. Assumption Manifold is satisfied because $T$ is a disjoint union of $T_K$, a manifold in $\R^{2K}$ of dimension $2K-1$. Assumption Absolute Continuity is satisfied because $z_i$ is continuously distributed. We take $\mathcal{Z}$ to be the set of all $(z_1,...,z_n)\in\R^n$ that are distinct. This is permitted because any two draws of $z_i$ being equal occurs only on a set of measure zero. Assumption Continuous Differentiability is satisfied with $A(t)$ equal to the full tangent space because the objective function is continuously differentiable, as a function of both $z$ and $(\tau,\mu)$. Assumption Generic is satisfied by (B.1) in the proof of Theorem 1. Therefore, $Q(\tau,\mu,z)$ is uniquely minimized over $T$ almost surely. 

Next, let $z$ be such that $Q(\tau,\mu,z)$ is uniquely minimized over $T$. Let $(\tau^*,\mu^*)$ be a global minimizer of $Q(\tau, \mu, z)$ over the unrestricted parameter space. Suppose $\nu^*$ is another $J$-vector of means and $\varsigma^*$ is another $J$-vector of positive weights that sum to $1$ satisfying $\sum_{j=1}^J\varsigma_j^*\mathds{1}\{\nu^*_j=\mu^*_k\} =\sum_{j=1}^J\tau_j^*\mathds{1}\{\mu^*_j=\mu^*_k\}$ for all $k$. Then, for all $z$, 
\[
f(z;\varsigma^*,\nu^*)=\sum_{j=1}^J \varsigma^*_j\phi(z-\nu^*_j)=\sum_{j=1}^J \tau^*_j\phi(z-\mu^*_j)=f(z;\tau^*,\mu^*). 
\]
This implies that $(\varsigma^*,\nu^*)$ is also a global minimizer of $Q(\tau,\mu,z)$ over the unrestricted parameter space. 

For the converse, let $(\varsigma^*,\nu^*)$ satisfy $\sum_{j=1}^J\varsigma_j^*\mathds{1}\{\nu^*_j=\mu^*_k\} \neq\sum_{j=1}^J\tau_j^*\mathds{1}\{\mu^*_j=\mu^*_k\}$ for some $k$. Let $(\varsigma^\dagger,\nu^\dagger)$ be a permutation and compression of $(\varsigma^*,\nu^*)$ so that $\nu^\dagger_1<\nu^\dagger_2<...<\nu^\dagger_{K_1}$ are the distinct elements of $\nu^*$ and $\varsigma^\dagger_k=\sum_{j=1}^J\varsigma^*_j\mathds{1}\{\nu^*_j=\nu^\dagger_k\}$. Similarly, let $(\tau^\dagger,\mu^\dagger)$ be a permutation and compression of $(\tau^*,\mu^*)$. Notice that $(\tau^\dagger,\mu^\dagger)$ and $(\varsigma^\dagger,\nu^\dagger)$ belong to $T$. Also notice that $Q(\tau^*,\mu^*,z)=Q(\tau^\dagger,\mu^\dagger,z)$ and $Q(\varsigma^*,\nu^*,z)=Q(\varsigma^\dagger,\nu^\dagger,z)$. This, together with the fact that $(\tau^*,\mu^*)$ is a global minimizer of $Q(\tau,\mu,z)$ over the unrestricted parameter space, implies that $(\tau^\dagger,\mu^\dagger)$ is a global minimizer of $Q(\tau,\mu,z)$ over $T$ (because each element of $T$ can be associated with some element of the unrestricted parameter space). By uniqueness, $(\tau^\dagger,\mu^\dagger)$ must be the unique global minimizer of $Q(\tau,\mu,z)$ over $T$. Also, the fact that there exists a $k$ such that $\sum_{j=1}^J\varsigma_j^*\mathds{1}\{\nu^*_j=\mu^*_k\} \neq\sum_{j=1}^J\tau_j^*\mathds{1}\{\mu^*_j=\mu^*_k\}$ implies that $(\tau^\dagger,\mu^\dagger)\neq(\varsigma^\dagger,\nu^\dagger)$. Therefore, $Q(\tau^\dagger,\mu^\dagger,z)<Q(\varsigma^\dagger,\nu^\dagger,z)$, so that $(\varsigma^*,\nu^*)$ is not a global minimizer of $Q(\tau,\mu,z)$ over the unrestricted parameter space. 
\end{proof}

\begin{proof}[\bf{Proof of Theorem 2}]

We verify the conditions of Lemma 1. Assumption Absolute Continuity is satisfied for $Y$ conditional on $X$. Assumption Manifold is satisfied by the assumption on $B$. Assumption Continuous Differentiability is satisfied because $p(\beta)$ is continuous on each $B_j$, $Q(\beta,Y,X)$ is continuously differentiable with respect to $Y$, and setting $A(t)=\{0\}$, so that no differentiability with respect to $\beta$ is needed. Consider condition (d) in Assumption Generic for $\beta_1\neq \beta_2$. If equation (d) in Assumption Generic is not satisfied for all $i=1,...,n$, then 
\[
\frac{d}{dY_i}\left(Q(\beta_1,Y,X)-Q(\beta_2,Y,X)\right)=X'_i(\beta_2-\beta_1)=0. 
\]
Stacking these up, we get that $X(\beta_2-\beta_1)=0$, and by the full rank of $X$ this implies $\beta_2=\beta_1$, a contradiction. Therefore, equation (d) in Assumption Generic is satisfied for some $i=1,...,n$. 
\end{proof}

\begin{proof}[\bf{Proof of Theorem 3}]

If, for every $M>0$, the argmin of $Q(t,W(\cdot))$ over $t\in[-M,M]$ is unique almost surely, then the argmin of $Q(t, W(\cdot))$ over $t\in\R$ is unique almost surely. This is because, if $\omega\in\Omega$ is the underlying state space, then 
\begin{align*}
&\{\omega\in\Omega| Q(t,W_\omega(\cdot)) \text{ has multiple minimizers over } \R\}\\
=&\cup_{M=1}^\infty \{\omega\in\Omega| Q(t,W_\omega(\cdot)) \text{ has multiple minimizers over } [-M,M]\},
\end{align*}
where $W_\omega(\cdot)$ denotes the draw of $W(\cdot)$ that is associated with $\omega\in\Omega$. Thus, we fix $M>0$ and show that the probability of multiple minimizers over $[-M,M]$ is zero.  

Let 
\[
W(t)=m(t)+B(t)+\Sigma_{t,M}\Sigma^{-1}_{M,M} Z,
\]
where $Z=W(M)-m(M)$ and $B(t)=W(t)-m(t)-\Sigma_{t,M}\Sigma^{-1}_{M,M}Z$ is independent of $Z$. Notice that the conditional distribution of $Z$ given $B(\cdot)$ is normal with variance $\Sigma_{M,M}>0$, and therefore continuously distributed. We condition on the realization of $B(\cdot)$ and verify Assumption Generic by taking a derivative with respect to $Z$. 

Fix $t_1>t_2$ and calculate
\[
\frac{d}{dZ}\left(Q(t_1,W(\cdot))-Q(t_2,W(\cdot))\right)=\int_{t_2}^{t_1}\phi(W(y))\Sigma_{y,M}\Sigma^{-1}_{M,M}dy,
\]
by the bounded convergence theorem, where $\phi(\cdot)$ is the standard normal pdf. Since the integrand on the right hand side is continuous and positive, for any $t_1>t_2$, the integral is positive, verifying Assumption Generic. Therefore, by Lemma 1, $Q(t,W(\cdot))$ has a unique minimum over $t\in [-M,M]$ almost surely, which implies uniqueness over $t\in \mathbb{R}$ almost surely. 
\end{proof}

\begin{lemma}
Let $T=\cup_{j\in J}T_j$ be a finite or countable disjoint union of second-countable Hausdorff manifolds. Let $f: T\rightarrow \mathbb{R}^l$ be continuously differentiable. Let $u\in\mathbb{R}^l$ be a \underline{non-invertible value} for $f$ if there exist $t_1\neq t_2\in T$ such that $f(t_1)=f(t_2)=u$. Let $f_n: T\rightarrow \mathbb{R}^l$ be a sequence of continuously differentiable and injective functions, converging uniformly over compact sets to $f$. Then, the Lebesgue measure of $\{u\in\mathbb{R}^l: u\text{ is a non-invertible value of $f$}\}$ is zero. 
\end{lemma}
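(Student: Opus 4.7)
The plan is to prove this by showing that every non-invertible value of $f$ must be a critical value of $f$ restricted to some $T_j$, and then invoking Sard's theorem.

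First, I would decompose the set of non-invertible values as $N = \bigcup_{j_1, j_2 \in J} N_{j_1, j_2}$, where $N_{j_1, j_2}$ consists of those $u \in \R^l$ with two distinct preimages, one in $T_{j_1}$ and one in $T_{j_2}$. Since $J$ is countable, it suffices to bound each $N_{j_1, j_2}$. If $\dim T_{j_i} < l$ for either $i$, then $f(T_{j_i})$ is Lebesgue null (the image under a $C^1$ map of a lower-dimensional manifold, a standard consequence of Sard), so $N_{j_1, j_2} \subseteq f(T_{j_i})$ is null. The case $\dim T_{j_i} > l$ is ruled out by invariance of domain, which forbids a continuous injection $f_n\colon T_{j_i} \to \R^l$ from an open subset of a higher-dimensional Euclidean space. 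Thus I reduce to $\dim T_{j_1} = \dim T_{j_2} = l$.

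The key claim I would then establish is: every $u \in N_{j_1, j_2}$ is a critical value of $f|_{T_{j_1}}$ or of $f|_{T_{j_2}}$. Suppose not; then both preimages $t_1, t_2$ are regular points, so by the inverse function theorem, $f$ is a diffeomorphism from small disjoint closed-ball chart neighborhoods $\bar D_1 \ni t_1$ and $\bar D_2 \ni t_2$ onto neighborhoods of $u$ in $\R^l$. Pick $\delta > 0$ with $\operatorname{dist}(u, f(\partial D_i)) \ge 2\delta$ for $i = 1, 2$. For $n$ sufficiently large, $\sup_{\bar D_1 \cup \bar D_2} \| f_n - f \| < \delta$, so the straight-line homotopy from $f$ to $f_n$ on each $\partial D_i$ avoids $u$. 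By homotopy invariance of the Brouwer degree, $\deg(f_n|_{\partial D_i}, u) = \deg(f|_{\partial D_i}, u) = \pm 1$, and hence $u \in f_n(\operatorname{int} D_i)$ for both $i$. This produces $s_1 \in D_1$ and $s_2 \in D_2$ with $f_n(s_1) = f_n(s_2) = u$; since $D_1 \cap D_2 = \emptyset$, this contradicts the injectivity of $f_n$, proving the claim.

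Sard's theorem (valid in the $C^1$ setting here since we are in the equidimensional case $\dim T_{j_i} = l$) then gives that the set of critical values of each $f|_{T_{j_i}}$ has measure zero, so $N_{j_1, j_2}$ does too, and taking the countable union over $(j_1, j_2) \in J \times J$ finishes the argument. The hard part is the degree-theoretic step above, since it is the only place the injectivity hypothesis on the $f_n$ is actually used, and some care is needed to ensure the closed-ball neighborhoods can be chosen with $f$ a diffeomorphism; if $t_1$ or $t_2$ lies near a boundary point one can either extend charts slightly past the boundary or handle boundary strata as separate lower-dimensional pieces absorbed into the $\dim T_j < l$ case.
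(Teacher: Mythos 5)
Your proposal is correct, and its skeleton coincides with the paper's: show that a non-invertible value whose two preimages are both regular points would force $u$ to lie in $f_n(N_1)\cap f_n(N_2)$ for disjoint neighborhoods $N_1,N_2$ of the preimages once $f_n$ is uniformly close to $f$, contradicting injectivity of $f_n$; hence every non-invertible value is a critical value, and Sard's theorem finishes. The genuine difference is in how the containment $u\in f_n(\mathrm{int}\,D_i)$ is established. The paper argues by hand: it sets $\bar u=f_n(f^{-1}(u))$, notes that the segment from $u$ to $\bar u$ must meet $f_n(\partial N_i)$ if $u\notin f_n(N_i)$, and derives a contradiction from the triangle inequality and the fact that $f(\partial N_i)\subset\partial M$ is at distance at least $\epsilon$ from $u$. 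You instead invoke homotopy invariance of the Brouwer degree along the straight-line homotopy from $f$ to $f_n$ on $\partial D_i$, so $\deg(f_n|_{D_i},u)=\deg(f|_{D_i},u)=\pm1\neq 0$ forces a solution of $f_n=u$ in $\mathrm{int}\,D_i$. The degree argument is the more standard and arguably more robust route (it sidesteps the point where the paper implicitly needs $\partial\bigl(f_n(N_i)\bigr)\subseteq f_n(\partial N_i)$), at the cost of importing degree theory; the paper's version is elementary. You are also more explicit about the dimension bookkeeping, disposing of $\dim T_j>l$ by invariance of domain and $\dim T_j<l$ by nullity of the image of a $C^1$ map, and noting that $C^1$ regularity suffices for Sard in the equidimensional case, whereas the paper simply observes that $d_j\le l$ and that rank-deficient points are critical, folding the low-dimensional pieces into the Sard step. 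Your closing caveat about boundary points is not needed for the lemma as stated (the $T_j$ there carry no boundary), but it does no harm.
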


\begin{proof}[\bf{Proof of Lemma 7}]

For each $j\in J$, denote the dimension of $T_j$ by $d_j$. Notice that in order for $f_n$ to be continuously differentiable and injective, $d_j$ must be less than or equal to $l$ for all $j\in J$. Denote the derivative of $f(t)$ by $f'(t)$. 

First, we show that any non-invertible value, $u$, must be a critical value, in the sense that there exists a $t\in T$ such that $f(t)=u$ and $f'(t)$ has rank less than $l$. Let $u$ be a non-invertible value. Then, there exist $t_1\neq t_2\in T$ such that $f(t_1)=f(t_2)=u$. Let $t_1\in T_{j_1}$ and $t_2\in T_{j_2}$. Suppose, to reach a contradiction, that $f'(t_1)$ and $f'(t_2)$ both have rank $l$. Notice that this requires $d_{j_1}=d_{j_2}=l$. Then, by the inverse function theorem, there exist neighborhoods, $N_1$ and $N_2$, of $t_1$ and $t_2$, and a neighborhood, $M$, of $u$ such that $f$ restricted to $N_i$ is invertible as a function onto $M$, for $i=1,2$. We can take $N_1, N_2$, and $M$ to be compact and homeomorphic to a closed ball in $\mathbb{R}^l$. We can also take $N_1$ and $N_2$ to be disjoint. 

Let $\epsilon>0$ such that the ball of radius $\epsilon$ around $u$ is contained in $M$. Then, by uniform convergence on compact sets, there exists an $n$ such that for all $t\in N_1\cup N_2$, $\|f_n(t)-f(t)\|<\epsilon/2$. For each $i=1,2$, let $f_n(N_i)=M_i$ denote the image of $N_i$ under $f_n$, and let $f_n(\partial N_i)=\partial M_i$ denote the image of the boundary of $N_i$, $\partial N_i$, under $f_n$. We know that $f$ maps the boundary of $N_i$ to the boundary of $M$ because $f$ is a homeomorphism when restricted to $N_i$, for each $i=1,2$. 

We show that $u\in M_i$ for both $i=1,2$. Suppose not for some $i$. Then, let $\bar u=f_n(f^{-1}(u))$, where the inverse of $f$ is restricted to $N_i$. By the choice of $n$, $\|\bar u-u\|<\epsilon/2$, and $\bar u\in M_i$. On the line segment connecting $u$ and $\bar u$, there exists at least one point, $u^\dagger\in \partial M_i$. 
Let $t^\dagger=f_n^{-1}(u^\dagger)$, and we notice that 
\[
\epsilon/2>\|f_n(t^\dagger)-f(t^\dagger)\|\ge \|u-f(t^\dagger)\|-\|f_n(t^\dagger)-u\|>\epsilon/2,
\]
where the first inequality follows by the choice of $n$, the second inequality is the triangle inequality, and the third inequality follows because (1) $f(t^\dagger)\in\partial M$, which is at least $\epsilon$ away from $u$ by the choice of $\epsilon$, and (2) $f_n(t^\dagger)=u^\dagger$ is on the line segment connecting $u$ and $\bar u$, so $\|u^\dagger-u\|\le \|\bar u-u\|<\epsilon/2$. This contradiction shows that $u\in M_i$ for both $i=1,2$. 

The fact that $u\in M_1\cap M_2$ is a contradiction because then there exist $\bar t_1\in N_1$ and $\bar t_2\in N_2$ such that $f_n(\bar t_1)=f_n(\bar t_2)=u$, which is impossible because $f_n$ is injective and $N_1\cap N_2=\emptyset$. Altogether, this shows that any non-invertible value, $u$, must be a critical value. 

By Sard's Theorem (see \cite{GuilleminPollack1974}), the set of critical values in $\mathbb{R}^l$ has Lebesgue measure zero. 
\end{proof}

\begin{proof}[\bf{Proof of Theorem 4}]

We verify the conditions of Lemma 1. Assumption Manifold is satisfied by the assumption on the identified set. The fact that $z$ is a continuous random vector implies Assumption Absolute Continuity. Assumption Continuous Differentiability is satisfied by Assumption Weak Identification (a). The only assumption we need to verify is Assumption Generic. 

Let $\pi\in\Pi$ denote the identified set. Consider $z\in\mathbb{R}^{d_z}$ and write $z=(z_1,z_2)$, where $z_1\in\mathbb{R}^{d_\beta}$ and $z_2\in\mathbb{R}^{d_h}$. For a fixed $z=(z_1, z_2)$ consider the equation 
\begin{equation}
h_\beta(\beta_0,\pi) (z_1-b)=z_2-h_\beta(\beta_0,\pi_0)b. 
\end{equation}
We show that $\{z=(z_1,z_2): \text{ equation (B.7) has multiple solutions over } \pi\in\Pi\}$ has Lebesgue measure zero. 

Fix $z_1$, and for every $n$ let 
\[
\tilde\beta_n=\beta_n-\frac{z_1}{\sqrt{n}}+\frac{a_n}{n},
\]
where $\|a_n\|\le 1$ is chosen so that $h(\tilde\beta_n,\pi)-h(\beta_0,\pi)$ is injective as a function of $\pi$. Such an $a_n$ always exists by Assumption (d) and for $n$ large enough so that $\beta_n-z_1/\sqrt{n}\in B$. Let 
\begin{align*}
f(\pi)&=h_\beta(\beta_0,\pi)(z_1-b), \text{ and }\\
f_n(\pi)&=-\sqrt{n}\left(h(\tilde\beta_n,\pi)-h(\beta_0,\pi)\right). 
\end{align*}

We verify the conditions of Lemma 7. Let $K$ be a compact subset of $\Pi$, and notice 
\begin{align*}
f_n(\pi)&=-\sqrt{n}\left(h(\tilde\beta_n,\pi)-h(\beta_n,\pi)\right)-\sqrt{n}\left(h(\beta_n,\pi)-h(\beta_0,\pi)\right)\\
&=-h_\beta(\ddot\beta_{n, \pi},\pi)\left(-z_1+\frac{a_n}{\sqrt{n}}\right)-\sqrt{n}\left(h(\beta_n,\pi)-h(\beta_0,\pi)\right)\\
&\rightarrow h_\beta(\beta_0,\pi)z_1-h_\beta(\beta_0,\pi)b, \text{ uniformly over $\pi\in K$}\\
&=f(\pi),
\end{align*}
where the second equality follows by a mean value expansion for every $\pi\in K$ and $\ddot \beta_{n,\pi}$ is on the line segment between $\tilde\beta_n$ and $\beta_n$. The convergence follows because (1) the second term converges to $h_\beta(\beta_0,\pi)b$ uniformly over $\pi\in K$ by Assumption Weak Identification (b), and (2) the first term converges to $h_\beta(\beta_0,\pi)z_1$ uniformly over $\pi\in K$ by the continuity of $h_\beta(\beta,\pi)$ and the fact that $a_n$ is bounded. 

Therefore, by Lemma 7, the set of all $z_2\in\mathbb{R}^{d_h}$ such that the equation $f(\pi)=z_2-h_\beta(\beta_0,\pi_0)b$ has multiple solutions over $\pi\in\Pi$ has Lebesgue measure zero. Since this holds for every $z_1$, the Lebesgue measure of $\{z=(z_1,z_2):$ equation (B.7) has multiple solutions over $\pi\in\Pi\}$ has Lebesgue measure zero. 

Next, let $\mathcal{Z}=\{z\in\mathbb{R}^{d_z}: \text{ equation (B.7) admits at most one solution}\}$. We want to show that $\xi(\pi_1,\pi_2,z)$ satisfies Assumption Generic on $\Xi=\{(\pi_1,\pi_2,z): \pi_1,\pi_2\in \Pi, \pi_1\neq \pi_2, z\in \mathcal{Z}\}$. We write $\xi(\pi_1,\pi_2,z)$ as 
\begin{align*}
&\hspace{5mm}Q(\pi_1,z)-Q(\pi_2,z)\\
&=2z'{H^{1/2}}'(g(\pi_1)-g(\pi_2))+\kappa(\pi_1)-\kappa(\pi_2)\\
&\hphantom{==}-(H^{1/2}z+g(\pi_1))'P(\pi_1)(H^{1/2}z+g(\pi_1))+(H^{1/2}z+g(\pi_2))'P(\pi_2)(H^{1/2}z+g(\pi_2))\\
&=2W'(2I_{d_z}-P(\pi_1)-P(\pi_2))\delta-W'(P(\pi_1)-P(\pi_2))W+\tilde\kappa(\pi_1,\pi_2),
\end{align*}
where $W=H^{1/2}z+\frac{g(\pi_1)+g(\pi_2)}{2}$ and $\delta=\frac{g(\pi_1)-g(\pi_2)}{2}$, and the second equality follows from some algebra for some $\tilde \kappa(\pi_1,\pi_2)$ that does not depend on $z$. 

Suppose part (d) in Assumption Generic is not satisfied. Then, 
\begin{align*}
0&=\frac{d}{dz}\xi(\pi_1,\pi_2,z)=2{H^{1/2}}'(2I_{d_z}-P(\pi_1)-P(\pi_2))\delta-2{H^{1/2}}'(P(\pi_1)-P(\pi_2))W. 
\end{align*}
This implies that 
\begin{equation}
0=M(\pi_1)(W+\delta)+M(\pi_2)(\delta-W), 
\end{equation}
where $M(\pi)=I_{d_z}-P(\pi)$, which projects onto the span of ${H^{-1/2}}'\left[\begin{array}{c}-h_\beta(\beta_0,\pi)'\\I_{d_h}\end{array}\right]$. Condition (c) implies that the ranges of $M(\pi_1)$ and $M(\pi_2)$ are linearly independent. Then, equation (B.8) implies that both $M(\pi_1)(W+\delta)=0$ and $M(\pi_2)(\delta-W)=0$. Premultiply both equations by $\left[0_{d_\beta\times d_h}, I_{d_h}\right]H^{1/2}$ to get 
\begin{align*}
\left(\left[\begin{array}{c}-h_\beta(\beta_0,\pi_1)'\\I_{d_h}\end{array}\right]'\hspace{-1mm}H^{-1}\hspace{-1mm}\left[\begin{array}{c}-h_\beta(\beta_0,\pi_1)'\\I_{d_h}\end{array}\right]\hspace{-1mm}\right)^{-1}\left[\begin{array}{c}-h_\beta(\beta_0,\pi_1)'\\I_{d_h}\end{array}\right]'\hspace{-1mm}{H^{-1/2}}(W+\delta)&=0\\
\left(\left[\begin{array}{c}-h_\beta(\beta_0,\pi_2)'\\I_{d_h}\end{array}\right]'\hspace{-1mm}H^{-1}\hspace{-1mm}\left[\begin{array}{c}-h_\beta(\beta_0,\pi_2)'\\I_{d_h}\end{array}\right]\hspace{-1mm}\right)^{-1}\left[\begin{array}{c}-h_\beta(\beta_0,\pi_2)'\\I_{d_h}\end{array}\right]'\hspace{-1mm}{H^{-1/2}}(\delta-W)&=0. 
\end{align*}
Premultiplying by invertible matrices and using the formulas for $W$ and $\delta$ gives 
\begin{align*}
\left[\begin{array}{c}-h_\beta(\beta_0,\pi_1)'\\I_{d_h}\end{array}\right]'{H^{-1/2}}(H^{1/2}z+g(\pi_1))&=0\\
\left[\begin{array}{c}-h_\beta(\beta_0,\pi_2)'\\I_{d_h}\end{array}\right]'{H^{-1/2}}(H^{1/2}z+g(\pi_2))&=0. 
\end{align*}
Rearranging, using $z=(z_1, z_2)'$, and using the formula for $g(\pi)$ gives 
\begin{align*}
h_\beta(\beta_0,\pi_1)(z_1-b)&=z_2-h_\beta(\beta_0,\pi_0)b\\
h_\beta(\beta_0,\pi_2)(z_1-b)&=z_2-h_\beta(\beta_0,\pi_0)b. 
\end{align*}

This shows that for $(z_1,z_2)\in\mathcal{Z}$, both $\pi_1$ and $\pi_2$ are solutions to equation (B.7). This is a contradiction because we have assumed that for any $z\in\mathcal{Z}
$, equation (B.7) has at most one solution. Therefore, $\frac{d}{dz}\xi(\pi_1,\pi_2,z)\neq 0$. This must hold for all $(\pi_1,\pi_2,z)\in\Xi$, showing that $\xi(\pi_1,\pi_2,z)$ satisfies Assumption Generic. 
\end{proof}

\bibliography{uniqueminreferences}

\end{document}